%

\documentclass[aap,MSNbibl,dvips]{arximspdf}
\usepackage{mathbh}
\usepackage{graphicx}

%

\doi{10.1214/13-AAP991} 
\volume{25}
\issue{1}
\pubyear{2015}
\firstpage{150}
\lastpage{176}

\makeatletter

\newcommand{\essinf}{\mathop{\operatorname{ess\,inf}}}

\newcommand{\inter}{\mathop{\operatorname{int}}}

\newtheorem{theorem}{Theorem}[section]
\newtheorem{lemma}[theorem]{Lemma}
\newtheorem{corollary}[theorem]{Corollary}
\newtheorem{proposition}[theorem]{Proposition}
\newproclaim{definition}[theorem]{Definition}
\newproclaim{remark}[theorem]{Remark}
\newproclaim{Assumptions}[theorem]{Assumption}
\makeatother

\begin{document}
\begin{frontmatter}

\title{On an integral equation for the free-boundary of stochastic,
irreversible investment problems}
\runtitle{An integral equation for the free-boundary}

\begin{aug}
\author[A]{\fnms{Giorgio} \snm{Ferrari}\corref{}\ead[label=e1]{giorgio.ferrari@uni-bielefeld.de}\thanksref{T1}}
\runauthor{G. Ferrari}
\address[A]{Center for Mathematical Economics\\
Bielefeld University\\
Universit\"{a}tsstra{\ss}e 25\\
D-33615 Bielefeld\\
Germany\\
\printead{e1}}
\affiliation{Bielefeld University}
\end{aug}
\thankstext{T1}{Supported by the German Research Foundation (DFG) via
Grant Ri 1128-4-1,
\textsl{Singular Control Games: Strategic Issues in Real Options and
Dynamic Oligopoly under Knightian Uncertainty}.}

\received{\smonth{12} \syear{2012}}
\revised{\smonth{11} \syear{2013}}

%
\begin{abstract}
In this paper, we derive a new handy integral equation for the
free-boundary of infinite time horizon, continuous time, stochastic,
irreversible investment problems with uncertainty modeled as a
one-dimensional, regular diffusion $X$. The new integral equation
allows to explicitly find the free-boundary $b(\cdot)$ in some so far
unsolved cases, as when the operating profit function is not
multiplicatively separable and $X$ is a three-dimensional Bessel
process or a CEV process.
Our result follows from purely probabilistic arguments. Indeed, we
first show that $b(X(t))=l^{*}(t)$, with $l^{*}$ the unique optional
solution of a representation problem in the spirit of Bank--El Karoui
[\textit{Ann. Probab.} \textbf{32}
(2004) 1030--1067]; then, thanks to such an identification and the
fact that $l^{*}$ uniquely solves a backward stochastic equation, we
find the integral problem for the free-boundary.
\end{abstract}

%
\begin{keyword}[class=AMS]
\kwd[Primary ]{93E20}
\kwd{60G40}
\kwd[; secondary ]{91B70}
\kwd{60H25}
\end{keyword}

\begin{keyword}
\kwd{Integral equation}
\kwd{free-boundary}
\kwd{irreversible investment}
\kwd{singular stochastic control}
\kwd{optimal stopping}
\kwd{one-dimensional diffusion}
\kwd{Bank and El Karoui's representation theorem}
\kwd{base capacity}
\end{keyword}
\end{frontmatter}

\section{Introduction}
\label{introduction}

In this paper, we find a new integral equation for the free-boundary
$b(\cdot)$ arising in infinite time horizon, continuous time,
stochastic, irreversible investment problems of the form
%
\begin{equation}
\label{problemintro} \sup_{\nu}\mathbb{E} \biggl\{\int
_0^{\infty} e^{-r t} \pi
\bigl(X^{x}(t), y + \nu(t)\bigr) \,dt - \int_0^{\infty}
e^{- r t} \,d\nu(t) \biggr\},
\end{equation}
with $X^{x}$ regular, one-dimensional diffusion modeling market
uncertainty. The integral problem for $b(\cdot)$ is derived by means of
purely probabilistic arguments.
After having completely characterized the solution of the singular
control problem~(\ref{problemintro}) by some first-order conditions for
optimality and in terms of the \textsl{base capacity} process~$l^{*}$,
unique optional solution of a representation problem \`{a} la Bank--El
Karoui~\cite{BankElKaroui}, we show that $l^{*}(t)=b(X^{x}(t))$. Such
an identification, the strong Markov property and a beautiful result in
\cite{Borodin} on the joint law of a regular, one-dimensional diffusion
and its running supremum both stopped at an independent exponentially
distributed random time, lead to the integral equation for $b(\cdot)$
%
\begin{equation}
\label{integraleqintro} \psi_r(x)\int_{x}^{\overline{x}}
\biggl(\int_{\underline{x}}^z \pi _c
\bigl(y,b(z)\bigr) \psi_r(y) m(dy) \biggr) \frac{s(dz)}{\psi_r^2(z)} = 1.
\end{equation}
Here, $\pi_c(x,c)$ is the instantaneous marginal profit function,
$\underline{x}$ and $\overline{x}$ the endpoints of the domain of
$X^{x}$, $r$ the discount factor, $\mathcal{G}$ the infinitesimal
generator associated to $X^{x}$, $\psi_r(x)$ the increasing solution to
the ordinary differential equation $\mathcal{G}u = ru$ and $m(dx)$ and
$s(dx)$ the speed measure and the scale function measure of $X^{x}$,
respectively.
The rather simple structure of equation (\ref{integraleqintro}) allows
to explicitly find the free-boundary even in some nontrivial settings;
that is, for example, the case of $X^{x}$ given by a three-dimensional
Bessel process and Cobb--Douglas or CES (constant elasticity of
substitution) profits. Such a result appears here for the first time.

The connection between irreversible investment problems under
uncertainty, optimal stopping and free-boundary problems is well known
in the economic and mathematical literature (cf., e.g., the monography
by Dixit and Pyndick \cite{DixitPindyck}).
From the mathematical point of view, a problem of optimal irreversible
investment may be modeled as a ``monotone follower'' problem; that is, a
problem in which control strategies are nondecreasing stochastic
processes, not necessarily absolutely continuous with respect to the
Lebesgue measure as functions of the time. Work on ``monotone follower''
problems and their application to Economics started with the early
papers by Karatzas, Karatzas and Shreve, El Karoui and Karatzas (cf.
\cite{Karatzas81,KaratzasShreve84} and \cite
{KaratzasElKarouiSkorohod}), among others. These authors studied the
problem of optimally minimizing expected costs when the controlled
diffusion is a Brownian motion starting at $x \in\mathbb{R}$ tracked
by a nondecreasing process, that is, the monotone follower.
By relying on purely probabilistic arguments, they showed that one may
associate to such a singular stochastic control problem a suitable
optimal stopping problem whose value function $v$ is related to the
value function $V$ of the original control problem by $v=\frac
{\partial
}{\partial x}V$. Moreover, the optimal stopping time $\tau^{*}$ is such
that $\tau^{*}=\inf\{t \geq0\dvtx \nu^{*}(t) > 0\}$, with $\nu^{*}$ the
optimal singular control.
Later on, this kind of link has been established also for more
complicated dynamics of the controlled diffusion; that is the case, for
example, of a geometric Brownian motion \cite{KaratzasBaldursson}, or
of a quite general controlled It\^o diffusion (see \cite{Benth} and
\cite{BoetiusKohlmann}, among others).

Usually (see \cite{Chiarolla4,Chiarolla2,Kobila,AOksendal,Pham} and \cite{RiedelSu}, among others), the
optimal irreversible investment policy consists in waiting until the
shadow value of installed capital is below the marginal cost of
investment; on the other hand, the times at which the shadow value of
installed capital equals the marginal cost of investment are optimal
times to invest. It follows that from the mathematical point of view
one must find the region in which it is profitable to invest
immediately (the so-called ``action region'') and the region in which,
instead, it is optimal to wait (the so-called ``no-action region'' or
``continuation region''). The boundary between these two regions is the
free-boundary of the optimal stopping problem naturally associated to
the singular control one. The optimal investment is then the least
effort to keep the controlled process inside the closure of the
``continuation region;'' that is, in a diffusive setting, the local time
of the optimally controlled diffusion at the free-boundary.

In the last decade, many papers addressed singular stochastic control
problems by means of a first-order conditions approach (cf., e.g.,
\cite
{BankRiedel1,Bank,CF,CFR,RiedelSu} and
\cite{Steg}), not necessarily relying on any Markovian or diffusive
structure. The solution of the optimization problem is indeed related
to that of a representation problem for optional processes (cf. \cite
{BankElKaroui}): the optimal policy consists in keeping at time $t$ the
state variable always above the lower bound $l^{*}(t)$, unique optional
solution of a stochastic backward equation \`{a} la Bank--El Karoui
\cite
{BankElKaroui}.
Clearly, such a policy acts like the optimal control of singular
stochastic control problems as the original monotone follower problem
(see, e.g., \cite{Karatzas81} and \cite{KaratzasShreve84}) or, more
generally, irreversible investment problems (cf. \cite
{KaratzasBaldursson,Chiarolla2,Kobila} and \cite
{AOksendal}, among others). Therefore, in a diffusive setting, the
signal process $l^{*}$ and the free-boundary $b(\cdot)$ arising in
singular stochastic control problems must be linked.
In \cite{CF}, the authors studied a continuous time, singular
stochastic irreversible investment problem over a finite time horizon
and they showed that for a production capacity given by a controlled
geometric Brownian motion with deterministic, time-dependent
coefficients one has $l^{*}(t)=b(t)$.

In this paper, we aim to understand the meaning of the process $l^{*}$
for the whole class of infinite time horizon, irreversible investment
problems of type (\ref{problemintro}).
By means of a first-order conditions approach, we first find the
optimal investment policy in terms of the ``base capacity'' process
$l^{*}$ (cf. \cite{RiedelSu}, Definition $3.1$), unique optional
solution of a representation problem in the spirit of Bank--El Karoui
\cite{BankElKaroui}. That completely solves control problem (\ref
{problemintro}).
The policy to invest just enough to keep the production capacity above
$l^{*}(t)$ turns out to be the optimal investment strategy at time $t$.
The base capacity process defines therefore a desirable value of
capacity that the controller aims to maintain.
We show indeed that $l^{*}(t)=b(X^{x}(t))$, where~$b(\cdot)$ is the
free-boundary of the optimal stopping problem
%
\begin{equation}
\label{optimalstoppingintro} v(x,y)=\inf_{\tau\geq0}\mathbb{E} \biggl\{\int
_0^{\tau} e^{-rs}\pi _c
\bigl(X^{x}(s),y\bigr) \,ds + e^{-r \tau} \biggr\}
\end{equation}
associated to (\ref{problemintro}) (cf., e.g., \cite
{KaratzasBaldursson}, Lemma $2$).
Such an identification, together with the fact that $l^{*}$ uniquely
solves a backward stochastic equation [see (\ref{backward}) below],
yields a new integral equation for the free-boundary [cf. (\ref
{integraleqintro}) and also our Theorem~\ref{integraleqthm} below]. Our
equation does not rely on It\^o's formula and does not require any
smooth-fit property or a priori continuity of $b(\cdot)$ to be applied.
In this sense, it differs from that one could derive from the local
time--space calculus of Peskir for semimartingales on continuous
surfaces \cite{Peskir} (such approach has been used in the context of
stochastic, irreversible investment problems in \cite{Chiarolla2} and,
more recently, in \cite{DeAF2013} for a reversible, stochastic
investment problem).
Notice that for multiplicatively separable profit functions [i.e., $\pi
(x,c)=f(x)g(c)$, as in the Cobb--Douglas case] problem (\ref
{optimalstoppingintro}) may be easily reduced to the linearly
parameter-dependent optimal stopping problem $\sup_{\tau\geq
0}\mathbb
{E}\{e^{-r\tau}(u(X^{x}(\tau)) - k)\}$ completely solved in \cite
{BankBaumgarten} for a regular, one-dimensional diffusion $X$ (take
$u(x):=\mathbb{E}\{\int_0^{\infty}e^{-rs}f(X^{x}(s))\,ds\}$ and
$k:=1/g'(y)$ as a real parameter to obtain by the strong Markov
property $v(x,y)=g'(y)[u(x) -\sup_{\tau\geq0}\mathbb{E}\{e^{-r\tau
}(u(X^{x}(\tau)) - k)\}]$). In \cite{BankBaumgarten}, the free-boundary
in the $(x,k)$-plane is obtained in terms of the infimum of an
auxiliary function of one variable that can be determined from the
Laplace transforms of the level passage times of $X$. However, our
integral equation (\ref{integraleqintro}) is derived for very general
concave profit functions and can be analytically solved even in
nonseparable cases, as when the profit is of CES type (see Section~\ref{CESsubsection} below). This represents one of the main novelties of
this work.

The paper is organized as follows. Section~\ref{problem} introduces the
optimal control problem. In Section~\ref{optimalsolutionFB}, we find
the optimal investment strategy, we identify the link between the ``base
capacity'' process and the free-boundary and we derive the integral
equation for the latter one.
Finally, in Section~\ref{Examples}, we discuss some relevant examples,
as the case in which the economic shock $X^{x}$ is a geometric Brownian
motion, a three-dimensional Bessel process or a CEV process and the
profits are Cobb--Douglas or CES.

\section{The optimal investment problem}
\label{problem}

On a complete filtered probability space $(\Omega, \mathcal{F},
\mathbb
{P})$, with $\{\mathcal{F}_t, t \geq0\}$ the filtration generated by
an exogenous Brownian motion $\{W(t), t \geq0\}$ and augmented by
$\mathbb{P}$-null sets, consider the optimal irreversible investment
problem of a firm. The uncertain status of the economy is represented
by the one-dimensional, time-homogeneous diffusion $\{X^{x}(t), t \geq
0\}$ with state space $\mathcal{I} \subseteq\mathbb{R}$, satisfying
the stochastic differential equation (SDE)
%
\begin{equation}
\label{SDE} \cases{ %
dX^{x}(t) = \mu
\bigl(X^{x}(t)\bigr)\,dt + \sigma\bigl(X^{x}(t)\bigr)\,dW(t),
\vspace*{2pt}\cr
X^{x}(0)= x, }
\end{equation}
for some Borel functions $\mu\dvtx  \mathcal{I} \mapsto\mathbb{R}$ and
$\sigma\dvtx \mathcal{I} \mapsto(0,+\infty)$. We assume that $\mu$ and
$\sigma$ fulfill
%
\begin{equation}
\label{YamadaWatanabe} \cases{ %
\bigl|\mu(x)-\mu(y)\bigr| \leq
K|x-y|,
\vspace*{2pt}\cr
\bigl|\sigma(x)-\sigma(y)\bigr| \leq h\bigl(|x-y|\bigr),}
\end{equation}
for every $x,y \in\mathcal{I}$, and for some $K>0$ and $h\dvtx \mathbb{R}_+
\mapsto\mathbb{R}_+$ strictly increasing, such that $h(0) = 0$ and
%
\begin{equation}
\label{YamadaWatanabe2} \int_{(0,\varepsilon)}\frac{du}{h^2(u)} = \infty\qquad
\mbox{for every } \varepsilon>0.
\end{equation}
Hence, pathwise uniqueness holds for the SDE (\ref{SDE}) by the
Yamada--Watanabe theorem (cf. \cite{KaratzasShreve}, Proposition
$5.2.13$ and Remark $5.3.3$, among others);
moreover, from (\ref{YamadaWatanabe}) and (\ref{YamadaWatanabe2}),
%
\begin{equation}
\label{LI} \int_{x-\varepsilon}^{x+\varepsilon}\frac{1 + |\mu(y)|}{\sigma^{2}(y)} \,dy
< +\infty\qquad \mbox{for some } \varepsilon>0,
\end{equation}
for every $x \in\inter(\mathcal{I})$.
Local integrability condition (\ref{LI}) implies that (\ref{SDE}) has a
weak solution (up to a possible explosion time) that is unique in the
sense of probability law (cf. \cite{KaratzasShreve}, Section $5.5$C).
Therefore, (\ref{SDE}) has a unique strong solution (possibly up to an
explosion time) due to \cite{KaratzasShreve}, Corollary $5.3.23$.
Also, it follows from (\ref{LI}) that the diffusion process $X^{x}$ is
regular in $\mathcal{I}$, that is, $X^{x}$ reaches $y$ with positive
probability starting at $x$, for any $x$ and $y$ in $\mathcal{I}$.
Hence, the state space $\mathcal{I}$ cannot be decomposed into smaller
sets from which $X^{x}$ could not exit (see, e.g., \cite{RogersWill},
Chapter VII).
We shall denote by $m(dx)$, $s(dx)$, $\mathcal{G}$ and $\mathbb{P}_x$
the speed measure, the scale function measure, the infinitesimal
generator and the probability measure such that $\mathbb{P}_x(\cdot) =
\mathbb{P}(\cdot| X(0)=x)$, $x \in\mathcal{I}$, respectively. Notice
that, under (\ref{LI}), $m(dx)$ and $s(dx)$ are well defined, and there
always exist two linearly independent, positive solutions of the
ordinary differential equation $\mathcal{G}u = \beta u$, $\beta> 0$
(cf. \cite{ItoMc}). These functions are uniquely defined up to
multiplication, if one of them is required to be strictly increasing
and the other to be strictly decreasing.
Finally, throughout this paper we assume that $\mathcal{I}$ is an
interval with endpoints $-\infty\leq\underline{x} < \overline{x}
\leq
+\infty$.

The firm's manager aims to increase the production capacity
%
\begin{equation}
\label{productioncapacity} C^{y,\nu}(t)= y + \nu(t),\qquad C^{y,\nu}(0)= y \geq0,
\end{equation}
by optimally choosing an irreversible investment plan $\nu\in\mathcal
{S}_o$, where
\begin{eqnarray*}
\mathcal{S}_o &:=& \bigl\{\nu\dvtx \Omega \times
\mathbb{R}_{+} \mapsto\mathbb{R}_{+}, \mbox{nondecreasing,
left-continuous, adapted}
\\
& &\hspace*{155pt} \mbox{such that } \nu(0)=0, \mathbb{P}\mbox{-a.s.}\bigr\}
\end{eqnarray*}
is the nonempty, convex set of irreversible investment processes.
The firm makes profit at rate $\pi(x,c)$ when its own capacity is $c$
and the status of the economy is~$x$, and the firm's manager discounts
revenues and costs at positive constant rate~$r$.
As for the operating profit function $\pi\dvtx  \mathcal{I} \times\mathbb
{R}_{+} \mapsto\mathbb{R}_{+}$, we make the following assumption.

%
\begin{Assumptions}
\label{AssProfit}
1. The mapping $c \mapsto\pi(x,c)$ is strictly increasing and
strictly concave with continuous derivative $\pi_{c}(x,c):=\frac
{\partial}{\partial c}\pi(x,c)$ on $\mathcal{I} \times(0,\infty)$
satisfying
\[
\lim_{c \rightarrow0}\pi_{c}(x,c)= \infty, \qquad\lim
_{c \rightarrow\infty}\pi_{c}(x,c)= \kappa,
\]
for some $0 \leq\kappa< \infty$.
\begin{longlist}[2.]
\item[2.] The process $(\omega,t) \mapsto\pi_c(X^{x}(\omega,t), y)$ is
$\mathbb{P} \otimes e^{-rt}\,dt$ integrable for any \mbox{$y > 0$}.
\end{longlist}
\end{Assumptions}

\begin{remark}
Notice that when $\kappa= 0$ we fall into the classical Inada
conditions which are satisfied, for example, by a Cobb--Douglas
operating profit.
In the case of a CES profit function of the form $\pi(x,c)=
(x^{{1}/{n}} + c^{{1}/{n}})^n$, $n\geq2$ (see Section~\ref{CESsubsection} below), one has instead $\kappa= 1$.
\end{remark}

The optimal investment problem is then
%
\begin{equation}
\label{valuefunction} V(x,y):=\sup_{\nu\in\mathcal{S}_o}\mathcal{J}_{x,y}(
\nu),
\end{equation}
where the profit functional $\mathcal{J}_{x,y}(\nu)$, net of investment
costs, is defined as
%
\begin{equation}
\label{netprofit} \mathcal{J}_{x,y}(\nu)=\mathbb{E} \biggl\{\int
_0^{\infty} e^{-r t} \pi
\bigl(X^{x}(t), C^{y,\nu}(t)\bigr) \,dt - \int
_0^{\infty} e^{- r t} \,d\nu(t) \biggr\}.
\end{equation}

Under Assumption~\ref{AssProfit}, $\mathcal{J}_{x,y}$ is well
defined but potentially infinite.
Since $\pi(x,\cdot)$ is strictly concave, $\mathcal{S}_o$ is convex and
$C^{y,\nu}$ is affine in $\nu$, then, if an optimal solution $\nu^{*}$
to (\ref{valuefunction}) does exist, it is unique.
Under further minor requirements the existence of a solution to (\ref
{valuefunction}) is a well-known result (see, e.g., \cite{RiedelSu},
Theorem~$2.3$, for an existence proof in a not necessarily Markovian framework).

\section{The optimal solution and the integral equation for the free-boundary}
\label{optimalsolutionFB}

A problem similar to (\ref{valuefunction}) (with depreciation in the
capacity dynamics) has been completely solved by Riedel and Su in \cite
{RiedelSu}, or (in the case of a time-dependent, stochastic finite
fuel) by Bank in \cite{Bank}. By means of a first-order conditions
approach and without relying on any Markovian or diffusive assumption,
these authors show that it is optimal to keep the production capacity
always above a desirable lower value of capacity, the \textsl{base
capacity} process (see \cite{RiedelSu}, Definition $3.1$), which is the
unique optional solution of a stochastic backward equation in the
spirit of Bank--El Karoui \cite{BankElKaroui}. In this section, we aim
to understand the meaning of the base capacity process $l^{*}$ in our setting.

Following \cite{Bank,CFR} or \cite{RiedelSu} (among others), we
start by deriving first-order conditions for optimality and by finding
the solution of (\ref{valuefunction}) in terms of a base capacity
process. Then, as a main new result, we identify the link between
$l^{*}$ and the free-boundary of the optimal stopping problem naturally
associated to the original singular control one (\ref{valuefunction})
and we determine an integral equation for the latter one.

Let $\mathcal{T}$ denote the set of all $(\mathcal{F}_t)$-stopping
times $\tau\geq0$ a.s. and notice that we may associate to $\mathcal
{J}_{x,y}(\nu)$ its supergradient as the unique optional process
defined by
%
\begin{equation}
\label{supergradient} \nabla\mathcal{J}_{x,y}(\nu) (\tau):= \mathbb{E} \biggl
\{\int_{\tau
}^{\infty} e^{- r s}
\pi_c\bigl(X^{x}(s), C^{y,\nu}(s)\bigr) \,ds \Big|
\mathcal {F}_{\tau} \biggr\} - e^{-r \tau},
\end{equation}
for any $\tau\in\mathcal{T}$.

\begin{remark}
\label{optionalsupergradient}
Following \cite{BankRiedel1}, Remark~3.1, among others, the quantity
$\nabla\mathcal{J}_{x,y}(\nu)(t)$ may be interpreted as the marginal
expected profit resulting from an additional infinitesimal investment
at time $t$ when the investment plan is $\nu$.
Mathematically, $\nabla\mathcal{J}_{x,y}(\nu)$ is the Riesz
representation of the profit gradient at $\nu$.
More precisely, define $\nabla\mathcal{J}_{x,y}(\nu)$ as the optional
projection of the product-measurable process
%
\begin{equation}
\label{progrmeas} \Phi(\omega,t):= \int_{t}^{\infty}
e^{-rs} \pi_{c}\bigl(X^{x}(\omega,s),C^{y,\nu}(\omega,s)\bigr) \,ds - e^{-rt},
\end{equation}
for $\omega\in\Omega$ and $t \geq0$. Hence, $\nabla\mathcal
{J}_{x,y}(\nu)$ is uniquely determined up to $\mathbb
{P}$-indistinguishability and it holds
\[
\mathbb{E} \biggl\{ \int_{0}^{\infty} \nabla
\mathcal{J}_{x,y}(\nu ) (t)\,d\nu (t) \biggr\} = \mathbb{E} \biggl\{ \int
_{0}^{\infty}\Phi(t) \,d\nu (t) \biggr\}
\]
for all admissible $\nu$ (cf. \cite{Jacod}, Theorem~1.33).
\end{remark}

\begin{theorem}
\label{FOCsthm}
Under Assumption~\ref{AssProfit}, a control $\nu^{*} \in\mathcal{S}_o$
is the unique optimal investment strategy for problem (\ref
{valuefunction}) if and only if the following first-order conditions
for optimality:
%
\begin{equation}
\label{FOCs} \cases{ %
\nabla
\mathcal{J}_{x,y}\bigl(\nu^{*}\bigr) (\tau) \leq0,&\quad $\mbox{a.s.
$\forall\tau\in\mathcal{T},$}$
\vspace*{2pt}\cr
\displaystyle\mathbb{E} \biggl\{\int_0^{\infty}\nabla\mathcal
{J}_{x,y}\bigl(\nu^{*}\bigr) (t) \,d\nu^{*}(t)
\biggr\} = 0, }
\end{equation}
hold true.
\end{theorem}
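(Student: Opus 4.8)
The plan is to prove this ``if and only if'' via the standard concavity argument for singular control problems, exploiting the fact that $\nu \mapsto \mathcal{J}_{x,y}(\nu)$ is concave with supergradient $\nabla\mathcal{J}_{x,y}(\nu)$ in the sense made precise in Remark \ref{optionalsupergradient}.

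\medskip

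\noindent\textbf{Sufficiency.} First I would assume $\nu^{*}\in\mathcal{S}_o$ satisfies (\ref{FOCs}) and show it is optimal. The key is the concavity (supergradient) inequality: for any admissible $\nu\in\mathcal{S}_o$,
\beq
\label{concaveineq}
\mathcal{J}_{x,y}(\nu) - \mathcal{J}_{x,y}(\nu^{*}) \leq \mathbb{E}\bigg\{\int_0^{\infty}\nabla\mathcal{J}_{x,y}(\nu^{*})(t)\,d(\nu - \nu^{*})(t)\bigg\}.
\eeq
This follows because $c\mapsto\pi(x,c)$ is concave with derivative $\pi_c$, so $\pi(x,C^{y,\nu}(t)) - \pi(x,C^{y,\nu^{*}}(t)) \leq \pi_c(x,C^{y,\nu^{*}}(t))(\nu(t)-\nu^{*}(t))$ pathwise; integrating against $e^{-rt}dt$, subtracting the cost terms $\int e^{-rt}d(\nu-\nu^{*})$, applying Fubini (justified by Assumption \ref{AssProfit}(2)) and the optional projection identity of Remark \ref{optionalsupergradient} yields (\ref{concaveineq}). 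Now split $d(\nu-\nu^{*}) = d\nu - d\nu^{*}$ on the right-hand side: the first condition in (\ref{FOCs}), together with the fact that $\nu$ is nondecreasing, gives $\mathbb{E}\{\int_0^{\infty}\nabla\mathcal{J}_{x,y}(\nu^{*})(t)\,d\nu(t)\} \leq 0$ (one integrates the a.s.\ inequality $\nabla\mathcal{J}_{x,y}(\nu^{*})(t)\leq 0$; a short measurability remark is needed to pass from ``for all $\tau$'' to ``$dt$-a.e.'', using the optional section theorem), while the second condition makes the $d\nu^{*}$ term vanish. Hence $\mathcal{J}_{x,y}(\nu)\leq\mathcal{J}_{x,y}(\nu^{*})$ for all $\nu$, i.e.\ $\nu^{*}$ is optimal; uniqueness was already noted after (\ref{netprofit}).

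\medskip

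\noindent\textbf{Necessity.} Conversely, suppose $\nu^{*}$ is the (unique) optimal control. To get the first condition, fix $\tau\in\mathcal{T}$ and, for $\varepsilon>0$, consider the perturbation $\nu^{\varepsilon} := \nu^{*} + \varepsilon\,\mathds{1}_{[\tau,\infty)}$, which is admissible. Then $\mathcal{J}_{x,y}(\nu^{\varepsilon})\leq\mathcal{J}_{x,y}(\nu^{*})$; dividing by $\varepsilon$ and letting $\varepsilon\downarrow 0$, the concavity of $\pi$ and dominated convergence (again via Assumption \ref{AssProfit}(2) and monotonicity of $\pi_c$ in $c$) give $\mathbb{E}\{\nabla\mathcal{J}_{x,y}(\nu^{*})(\tau)\}\leq 0$. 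Since this holds for every stopping time $\tau$, the optional section theorem upgrades it to $\nabla\mathcal{J}_{x,y}(\nu^{*})(\tau)\leq 0$ a.s.\ for every $\tau\in\mathcal{T}$. For the second condition, apply the just-proven first condition to get $\mathbb{E}\{\int_0^{\infty}\nabla\mathcal{J}_{x,y}(\nu^{*})(t)\,d\nu^{*}(t)\}\leq 0$; the reverse inequality comes from testing optimality against the scaled controls $\nu^{\lambda} := \lambda\nu^{*}$ for $\lambda$ slightly less than $1$ (or, more cleanly, $\nu$ with $d\nu = (1-\varepsilon)d\nu^{*}$), whose admissibility is clear, and passing to the limit $\lambda\uparrow 1$ in $\frac{1}{1-\lambda}[\mathcal{J}_{x,y}(\nu^{*})-\mathcal{J}_{x,y}(\nu^{\lambda})]\leq 0$ after using the supergradient inequality in the other direction.

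\medskip

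\noindent The main technical obstacle is not the algebra of the variational inequalities but the measure-theoretic care: justifying the interchange of $\mathbb{E}$, $dt$-integration and differentiation under the only integrability hypothesis available (Assumption \ref{AssProfit}(2)), and especially the passage from the pointwise-in-$\tau$ inequality $\mathbb{E}\{\nabla\mathcal{J}_{x,y}(\nu^{*})(\tau)\}\leq 0$ to the pathwise statement $\nabla\mathcal{J}_{x,y}(\nu^{*})\leq 0$ up to indistinguishability, which rests on the optional section theorem and the fact that $\nabla\mathcal{J}_{x,y}(\nu^{*})$ is a well-defined optional process (Remark \ref{optionalsupergradient}). One also has to make sure the perturbed controls stay in $\mathcal{S}_o$ (left-continuity, $\nu(0)=0$), which is immediate for the perturbations chosen above.
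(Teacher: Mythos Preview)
Your outline is the standard variational argument and is exactly what the paper invokes: the paper's own proof is just two citations (sufficiency from concavity as in \cite{Bank}, necessity as in \cite{Steg}), and your sketch reproduces that reasoning. So the approach is correct and aligned with the paper.

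Two small slips to clean up. First, the perturbation $\nu^{\varepsilon}=\nu^{*}+\varepsilon\,\mathds{1}_{[\tau,\infty)}$ is \emph{not} in $\mathcal{S}_o$: $t\mapsto\mathds{1}_{[\tau,\infty)}(t)$ fails left-continuity at $t=\tau$ (and $\nu^{\varepsilon}(0)\neq 0$ on $\{\tau=0\}$). Use $\mathds{1}_{(\tau,\infty)}$ instead; the Stieltjes increment is still $\varepsilon\,\delta_{\tau}$ and the directional derivative is unchanged. In fact, perturbing by $\varepsilon\,\mathds{1}_{A}\mathds{1}_{(\tau,\infty)}$ for arbitrary $A\in\mathcal{F}_{\tau}$ gives $\mathbb{E}\{\mathds{1}_A\,\nabla\mathcal{J}_{x,y}(\nu^{*})(\tau)\}\leq 0$ directly, which yields the a.s.\ inequality without invoking the section theorem. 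Second, in the complementary-slackness step you wrote $\frac{1}{1-\lambda}[\mathcal{J}_{x,y}(\nu^{*})-\mathcal{J}_{x,y}(\nu^{\lambda})]\leq 0$; by optimality of $\nu^{*}$ and $\lambda<1$ this quantity is $\geq 0$, and its limit as $\lambda\uparrow 1$ (the left derivative of $\lambda\mapsto\mathcal{J}_{x,y}(\lambda\nu^{*})$ at $1$) is precisely $\mathbb{E}\{\int_0^{\infty}\nabla\mathcal{J}_{x,y}(\nu^{*})(t)\,d\nu^{*}(t)\}\geq 0$, which is the reverse inequality you need.
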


\begin{pf}
Sufficiency follows from concavity of $\pi(x,\cdot)$ (see, e.g.,
\cite
{Bank}),\break whereas for necessity see \cite{Steg}, Proposition $3.2$.
\end{pf}

Although the first-order conditions (\ref{FOCs}) completely
characterize the optimal investment plan $\nu^{*}$, they are not always
binding, and thus they cannot be directly applied to determine $\nu
^{*}$. Nevertheless, the optimal control may be obtained in terms of
the solution of a suitable Bank--El Karoui's representation problem
\cite
{BankElKaroui} related to (\ref{FOCs}).

For a fixed $T \leq+ \infty$, the Bank--El Karoui representation
theorem (cf. \cite{BankElKaroui}, Theorem~3 and Remark~2.1) states
that, given:
\begin{itemize}
\item an optional process $Y=\{Y(t), t \in[0,T]\}$ of class (D),
lower-semicontinuous in expectation with $Y(T)=0$,
\item a nonnegative, atomless optional random Borel measure $\mu
(\omega,dt)$ on $[0,T]$,
\item$f(\omega,t,x)\dvtx \Omega\times[0,T] \times\mathbb{R} \mapsto
\mathbb{R}$ such that $f(\omega, t, \cdot)\dvtx \mathbb{R} \mapsto
\mathbb
{R}$ is continuous, strictly decreasing from $+\infty$ to $-\infty$,
and the stochastic process $f(\cdot, \cdot,x)\dvtx\break \Omega\times[0,T]
\mapsto\mathbb{R}$ is progressively measurable and integrable with
respect to $d\mathbb{P} \otimes\mu(\omega,dt)$,
\end{itemize}
then there exists an optional process $\xi= \{\xi(t), t \in[0,T]\}$
taking values in $\mathbb{R} \cup\{-\infty\}$ such that for all
$\tau
\in\mathcal{T}$,
\[
f\Bigl(t,\sup_{\tau\leq u < t}\xi(u)\Bigr)\mathbh{1}_{(\tau, T]}(t)
\in L^1 \bigl(d\mathbb{P}\otimes\mu(\omega,dt) \bigr)
\]
and
%
\begin{equation}
\label{backwardgenerica} \mathbb{E} \biggl\{ \int_{(\tau, T]} f\Bigl(s, \sup
_{\tau\leq u < s} \xi (u)\Bigr) \mu(ds) \Big| \mathcal{F}_{\tau}
\biggr\}= Y(\tau).
\end{equation}
In \cite{BankElKaroui}, Lemma $4.1$ (see also \cite{BankFollmer},
Remark $1.4$(ii)), a real valued process $\xi$ is considered upper
right-continuous on $[0,T)$ if, for each $t$, $\xi(t) = \limsup_{s
\searrow t} \xi(s)$ with
%
\begin{equation}
\label{urc} \limsup_{s \searrow t} \xi(s):= \lim
_{\varepsilon\downarrow0} \sup_{s
\in[t, (t+\varepsilon) \wedge T]} \xi(s).
\end{equation}
Then, by \cite{BankElKaroui}, Theorem $1$, any progressively
measurable, upper right-continuous solution $\xi$ to (\ref
{backwardgenerica}) is uniquely determined up to optional sections on
$[0,T)$ in the sense that
\[
\xi(\tau) = \essinf_{\tau< \sigma\leq T}\Xi_{\tau,\sigma},\qquad \tau\in[0,T),
\]
where $\Xi_{\tau,\sigma}$ is the unique (up to a $\mathbb{P}$-null set)
$\mathcal{F}_{\tau}$-measurable random variable satisfying
\[
\mathbb{E}\bigl\{Y(\tau) - Y(\sigma) | \mathcal{F}_{\tau}\bigr\} =
\mathbb {E} \biggl\{\int_{(\tau, \sigma]} f(t,\Xi_{\tau,\sigma})
\mu(dt) \Big| \mathcal {F}_{\tau} \biggr\}.
\]

With $\kappa$ as in Assumption~\ref{AssProfit}, from now one we make
the following assumption.

\begin{Assumptions}
\label{rbiggerthankappa}
$r > \kappa$.
\end{Assumptions}

The following result holds.

\begin{proposition}
\label{existenceback}
Under Assumptions \ref{AssProfit} and \ref{rbiggerthankappa}, there
exists a unique (up to indistinguishability) strictly positive optional
solution $l^{*}$ to the backward stochastic equation
%
\begin{equation}
\label{backward} \mathbb{E} \biggl\{\int_{\tau}^{\infty}
e^{- r s} \pi_c\Bigl(X^{x}(s), \sup
_{\tau\leq u < s}l^{*}(u)\Bigr) \,ds \Big|\mathcal{F}_{\tau}
\biggr\} = e^{-r
\tau}, \qquad\tau\in\mathcal{T}.
\end{equation}
Moreover, the process $l^{*}$ has upper right-continuous paths.
\end{proposition}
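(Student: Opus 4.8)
Following the scheme of \cite{RiedelSu} and \cite{Bank}, the plan is to derive Proposition~\ref{existenceback} from the Bank--El Karoui Representation Theorem recalled above. First I recast (\ref{backward}) in the template (\ref{backwardgenerica}). I take $T=+\infty$; since $e^{-r\tau}=\mathbb{E}\{\int_{\tau}^{\infty} r\,e^{-rs}\,ds\mid\mathcal{F}_{\tau}\}$, equation (\ref{backward}) is equivalent to
\[
\mathbb{E}\Big\{\int_{\tau}^{\infty} e^{-rs}\big[\pi_c(X^{x}(s),{\textstyle\sup_{\tau\le u\le s}}l^{*}(u))-r\big]\,ds\,\Big|\,\mathcal{F}_{\tau}\Big\}=0,\qquad \tau\in\mathcal{T},
\]
so I may take the optional process $Y\equiv 0$ (trivially of class (D), lower--semicontinuous in expectation, with $Y(\infty)=0$) and the nonnegative optional measure $\mu(\omega,ds):=e^{-rs}\,ds$. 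To match the level variable I fix a strictly increasing $C^{1}$--bijection $\phi:\mathbb{R}\to(0,\infty)$, say $\phi(x)=e^{x}$, and look for $l^{*}$ of the form $l^{*}=\phi(\xi)$; since $\phi$ is increasing, $\sup_{\tau\le u\le s}l^{*}(u)=\phi(\sup_{\tau\le u\le s}\xi(u))$, the strict positivity of $l^{*}$ amounts to $\xi$ never equalling $-\infty$, and the finiteness of $l^{*}$ is automatic because a Bank--El Karoui solution takes values in $\mathbb{R}\cup\{-\infty\}$. The natural generator is then $f(\omega,s,x):=\pi_c(X^{x}(\omega,s),\phi(x))-r$.

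Next I verify the hypotheses of the representation theorem for $f$. Progressive measurability of $(\omega,s)\mapsto\pi_c(X^{x}(\omega,s),\phi(x))$ for fixed $x$ follows from continuity of $\pi_c$ (Assumption~\ref{AssProfit}(1)) and of the paths of $X^{x}$, and its $d\mathbb{P}\otimes\mu$--integrability is Assumption~\ref{AssProfit}(2) with $y=\phi(x)>0$; strict concavity of $\pi(x,\cdot)$ makes $f(\omega,s,\cdot)$ continuous and strictly decreasing, with $f(\omega,s,x)\to+\infty$ as $x\to-\infty$ by the Inada--type condition $\lim_{c\to 0}\pi_c=+\infty$. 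The only hypothesis that fails literally --- and the step I expect to carry the real weight --- is the behaviour at $+\infty$: since $\lim_{c\to\infty}\pi_c=\kappa$ one gets $f(\omega,s,x)\to\kappa-r$, which by Assumption~\ref{rbiggerthankappa} is strictly negative but finite, so $f$ does not decrease to $-\infty$. I would handle this by approximating $f$ from below with the genuine bijections $f_{n}(\omega,s,x):=f(\omega,s,x)-\tfrac1n(e^{x}-1)^{+}$, which coincide with $f$ for $x\le 0$, decrease to $-\infty$ as $x\to+\infty$, and satisfy $f_{n}\uparrow f$; the representation theorem then yields optional, upper right-continuous $\xi_{n}$, nondecreasing in $n$ by the monotonicity of Bank--El Karoui representations in the generator, and I would pass to the limit $l^{*}:=\lim_n\phi(\xi_n)$. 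Here Assumption~\ref{rbiggerthankappa} enters decisively to keep the limit finite and to make the correction term disappear, through the bound $\mathbb{E}\{\int_{\tau}^{\infty}e^{-r(s-\tau)}\pi_c(X^{x}(s),c)\,ds\mid\mathcal{F}_{\tau}\}\downarrow\kappa/r<1$ as $c\uparrow\infty$ (by Assumption~\ref{AssProfit}(2) and dominated convergence), which prevents $\sup l^{*}_n$ from diverging on any set of positive probability. Verifying this convergence --- in particular that the running suprema pass to the limit and that the resulting $l^{*}$ indeed solves (\ref{backward}) --- is the technical core.

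It then remains to read off the three conclusions. Uniqueness up to indistinguishability and upper right-continuity of the paths are exactly what the representation theorem gives for upper right-continuous solutions of (\ref{backwardgenerica}), and they transfer to (\ref{backward}) since $\phi$ is a homeomorphism (conversely any optional solution of (\ref{backward}), composed with $\phi^{-1}$, solves (\ref{backwardgenerica}) and hence is the canonical $\xi$). Strict positivity, i.e.\ $\xi>-\infty$ equivalently $l^{*}>0$, is again a consequence of the Inada condition together with the finiteness of the right-hand side $e^{-r\tau}$ of (\ref{backward}): were $l^{*}(\tau)=0$ on a set of positive probability, upper right-continuity would force $\sup_{\tau\le u\le s}l^{*}(u)\to 0$ as $s\downarrow\tau$ there, so that $\pi_c(X^{x}(\cdot),\sup_{\tau\le u\le\cdot}l^{*}(u))$ would be so large just after $\tau$ that the left-hand side of (\ref{backward}) could not equal the finite value $e^{-r\tau}$ --- a contradiction.
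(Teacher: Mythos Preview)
Your strategy differs from the paper's in one essential respect: to cope with the fact that $\pi_c(x,\cdot)$ only decreases to $\kappa>-\infty$, you approximate the generator $f$ from below by genuine bijections $f_n$ and plan to pass to the limit, whereas the paper avoids any limiting procedure by \emph{extending} the generator. Concretely, the paper parametrizes capacity by $c=-1/l$ for $l<0$ and sets
\[
f(\omega,t,l)=\pi_c\big(X(\omega,t),-\tfrac{1}{l}\big)\ \text{for }l<0,\qquad f(\omega,t,l)=-l+\kappa\ \text{for }l\ge 0,
\]
which is continuous at $l=0$ (since $\pi_c\to\kappa$) and strictly decreasing from $+\infty$ to $-\infty$, so Bank--El Karoui applies \emph{once}, yielding $\xi^{*}$. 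A short a~posteriori contradiction using $r>\kappa$ then shows $\xi^{*}<0$ everywhere, so the artificial branch $l\ge 0$ is never visited and $l^{*}:=-1/\xi^{*}$ solves (\ref{backward}). This buys a clean one-shot argument with no convergence issues.

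Your route, by contrast, has a genuine gap at exactly the point you flag as the ``technical core''. First, the Bank--El Karoui theorem as stated in the paper does \emph{not} assert that the optional solution it produces is upper right-continuous; it only asserts uniqueness \emph{among} upper right-continuous solutions. The paper proves upper right-continuity of $\xi^{*}$ separately, via the threshold representation $\xi^{*}(t)=\sup\{l:\Xi^{l}(t)=Y(t)\}$ and a Dellacherie--Lenglart argument; your proposal skips this step for the $\xi_n$. Second, and more seriously, even granting that each $\xi_n$ is upper right-continuous, an increasing pointwise limit of upper right-continuous processes need \emph{not} be upper right-continuous: take $\xi_n(s)=\mathds{1}_{\{s\ge t+1/n\}}$, each right-continuous and increasing in $n$, with limit $\xi(t)=0<1=\limsup_{s\searrow t}\xi(s)$. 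So the sentence ``upper right-continuity of the paths are exactly what the representation theorem gives \ldots\ and they transfer since $\phi$ is a homeomorphism'' does not go through; you would need either an independent regularity argument for the limit or to show that the limit coincides with the canonical threshold solution $\sup\{l:\Xi^{l}(t)=0\}$ for the limiting $f$ and then rerun the paper's regularity proof. Finally, your strict-positivity argument (``$\pi_c$ would be so large just after $\tau$ that the left-hand side could not equal $e^{-r\tau}$'') is heuristic: blow-up of the integrand at a single instant does not by itself force divergence of the time integral. The paper's extension trick makes all of these issues disappear, since strict negativity of $\xi^{*}$ (hence strict positivity of $l^{*}$) follows from a two-line computation using the linear branch $-l+\kappa$ and Assumption~\ref{rbiggerthankappa}.
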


\begin{pf}
Take $\kappa$ as in Assumption~\ref{AssProfit}, apply the Bank--El
Karoui representation theorem with $T=+\infty$ to
%
\begin{equation}
\label{identification} Y(\omega,t):= e^{-rt},\qquad \mu(\omega,dt):=
e^{- rt}\,dt
\end{equation}
and
%
\begin{equation}
\label{identificationf} f(\omega,t,l):= \cases{ %
\displaystyle\pi_c \biggl(X(\omega,t), -\frac{1}{l} \biggr),&\quad$\mbox{for }
l<0,$
\vspace*{2pt}\cr
-l + \kappa,&\quad$\mbox{for } l\geq0,$}
\end{equation}
and define
%
\begin{equation}
\label{definizionexil} \Xi^l(t):= \essinf_{\tau\geq t}\mathbb{E} \biggl\{
\int_t^{\tau
}f(s,l)\mu (ds) + Y(\tau) \Big|
\mathcal{F}_t \biggr\},\qquad l \in\mathbb{R}, t \geq0.
\end{equation}
Then, the optional process (cf. \cite{BankElKaroui}, equation (23)
and Lemma~4.13)
%
\begin{equation}
\label{defxi1} \xi^{*}(t) = \sup\bigl\{ l \in\mathbb{R}\dvtx
\Xi^l(t) = Y(t)\bigr\},\qquad t\geq0,
\end{equation}
solves the representation problem
%
\begin{equation}
\label{representationproblem0} \mathbb{E} \biggl\{ \int_{\tau}^{\infty}
e^{-rs} f\Bigl(s,\sup_{\tau
\leq u <
s} \xi^{*}(u)
\Bigr) \,ds\Big | \mathcal{F}_{\tau} \biggr\} = e^{-r\tau}, \qquad\tau\in
\mathcal{T}.
\end{equation}

If now $\xi^{*}$ has upper right-continuous paths and it is strictly
negative, then the strictly positive, upper right-continuous process
$l^{*}(t) = - \frac{1}{\xi^{*}(t)}$ solves
\begin{eqnarray*}
\label{representationproblem2} e^{-r\tau}& = &\mathbb{E} \biggl\{ \int
_{\tau}^{\infty} e^{- r s} \pi_{c}
\biggl(X^{x}(s), \frac{1}{-\sup_{\tau
\leq u < s}( - {1}/{(l^{*}(u))})} \biggr) \,ds \Big| \mathcal
{F}_{\tau
} \biggr\}
\\
&= & \mathbb{E} \biggl\{ \int_{\tau
}^{\infty}
e^{- r s} \pi_{c} \biggl(X^{x}(s),
\frac{1}{\inf_{\tau
\leq u
< s} ({1}/{(l^{*}(u))})} \biggr) \,ds \Big| \mathcal{F}_{\tau} \biggr\}
\\
& =&\mathbb{E} \biggl\{ \int_{\tau
}^{\infty}
e^{- r s} \pi_{c}\Bigl(X^{x}(s), \sup
_{\tau\leq u < s} l^{*}(u)\Bigr) \,ds \Big| \mathcal{F}_{\tau}
\biggr\},
\end{eqnarray*}
for any $\tau\in\mathcal{T}$, that is, $l^{*}$ solves (\ref
{backward}), thanks to (\ref{identificationf}) and (\ref
{representationproblem0}). Moreover, $\xi^{*}$ (and hence $l^{*}$) is
unique up to optional sections by \cite{BankElKaroui}, Theorem $1$, as
it is optional and upper right-continuous. Therefore, it is unique up
to indistinguishability by Meyer's optional section theorem (see, e.g.,
\cite{DellMeyer}, Theorem IV.86).

To complete the proof, we must show that $\xi^{*}(t)$ is indeed upper
right-continuous and strictly negative.
We start by proving its upper right-con\-tinuity. To accomplish that we
only need to prove that $\xi^{*}$ has upper semi-right-continuous
sample paths, that is,
%
\begin{equation}
\label{usrc} \limsup_{s \searrow t}\xi^{*}(s) \leq
\xi^{*}(t),
\end{equation}
since
\[
\limsup_{s \searrow t}\xi^{*}(s) \geq\xi^{*}(t)
\]
by definition [cf. (\ref{urc})].
Thanks to \cite{DellLeng}, Proposition $2$ (cf. also \cite
{BankKuchler}, proof of Theorem $1$) it suffices to show that $\lim_{n
\rightarrow\infty} \xi^{*}(\tau_n) \leq\xi^{*}(\tau)$, for any
sequence of stopping times $\{\tau_n\}_{n \geq1}$ such that $\tau_n
\downarrow\tau$ and for which there exists a.s. $\zeta:=\lim_{n
\rightarrow\infty}\xi^{*}(\tau_n)$.
Recall $\Xi^l$ of (\ref{definizionexil}), with $Y$, $\mu$ and $f$ as in
(\ref{identification}) and (\ref{identificationf}), and also that
$\xi^{*}(t) = \sup\{ l \in\mathbb{R} \dvtx \Xi^l(t) = Y(t)\}$ [cf.
\cite{BankElKaroui}, equation (23)].
Now, given $\varepsilon> 0$, for $\{\tau_n\}_{n \geq1}$ as above we have
\[
\Xi^{\zeta- \varepsilon}({\tau}) = \lim_{n \rightarrow\infty} \Xi ^{\zeta
- \varepsilon}({
\tau_n}) = Y(\tau),
\]
where we have used right-continuity of $t \mapsto\Xi^l(t)$, the fact
that $l \mapsto\Xi^l(t)$ is a continuous, decreasing mapping (cf.
\cite
{BankElKaroui}, Lemma $4.12$) and the threshold representation of $\xi^{*}$.
Hence, $\zeta- \varepsilon\leq\xi^{*}(\tau)$ and $\xi^{*}$ is upper
right-continuous because $\varepsilon>0$ was arbitrary.
Finally, we now show that $\xi^{*}$ is strictly negative. Define
\[
\sigma:= \inf\bigl\{t \geq0 \dvtx \xi^{*}(t) \geq0 \bigr\},
\]
then for $\omega\in\{\omega\dvtx  \sigma(\omega) < +\infty\}$, the upper
semi right-continuity of $\xi^{*}$ implies $\xi^{*}(\sigma)\geq0$, and
thus $\sup_{\sigma\leq u < s}\xi^{*}(u) \geq0$ for all $s > \sigma$.
Therefore, (\ref{representationproblem0}) with $\tau= \sigma$, that is,
%
\begin{equation}
\label{xipos} e^{-r \sigma} = \mathbb{E} \biggl\{ \int_{\sigma}^{\infty}
e^{- rs} \Bigl[-\sup_{\sigma\leq u < s}\xi^{*}(u) +
\kappa \Bigr] \,ds \Big| \mathcal{F}_{\sigma} \biggr\},
\end{equation}
or equivalently
\[
\biggl(\frac{r-\kappa}{r} \biggr)e^{-r \sigma} = - \mathbb{E} \biggl\{ \int
_{\sigma}^{\infty} e^{- rs} \sup
_{\sigma\leq u < s}\xi^{*}(u) \,ds \Big| \mathcal{F}_{\sigma}
\biggr\},
\]
is not possible for $\omega\in\{\omega\dvtx  \sigma(\omega) < +\infty\}$
since the right-hand side of (\ref{xipos}) is nonpositive, whereas the
left-hand side is always strictly positive due to Assumption~\ref
{rbiggerthankappa}. It follows that $\sigma=+\infty$ a.s., and hence
$\xi^{*}(t) < 0$ for all $t \geq0$ a.s.
\end{pf}

\begin{proposition}
\label{optimalsolthm}
Under Assumptions \ref{AssProfit} and \ref{rbiggerthankappa}, the
unique optimal irreversible investment process for problem (\ref
{valuefunction}) is given by
%
\begin{equation}
\label{optimalsol} \nu^{*}(t)=\Bigl(\sup_{0 \leq s < t}
l^{*}(s) - y \Bigr) \vee0,\qquad t>0, \nu^{*}(0)=0,
\end{equation}
where $l^{*}$ is the unique optional upper right-continuous solution to
(\ref{backward}).
\end{proposition}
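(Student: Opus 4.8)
The strategy is to show that the process $\nu^{*}$ displayed in (\ref{optimalsol}) satisfies the first order conditions (\ref{FOCs}); by Theorem \ref{FOCsthm} this identifies it as the (unique) optimal investment plan, uniqueness having already been recorded from strict concavity of $\pi(x,\cdot)$, convexity of $\mathcal{S}_o$ and affinity of $C^{y,\nu}$ in $\nu$.

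\textbf{Admissibility.} First I would check that $\nu^{*}\in\mathcal{S}_o$: nonnegativity is built into the $\vee\,0$; the map $t\mapsto\sup_{0\le s<t}l^{*}(s)$ is nondecreasing and left-continuous (the supremum being over the half-open interval $[0,t)$), hence so is $\nu^{*}$; adaptedness follows from optionality of $l^{*}$; and $\nu^{*}(0)=0$ by construction. This yields $C^{y,\nu^{*}}(t)=y+\nu^{*}(t)=y\vee\sup_{0\le s<t}l^{*}(s)$ for $t>0$. Two elementary facts about running suprema will carry most of the argument: (i) for every $\tau\in\mathcal{T}$ and every $s>\tau$ one has $C^{y,\nu^{*}}(s)\ge\sup_{\tau\le u<s}l^{*}(u)$; and (ii) since $s\mapsto\sup_{\tau\le u\le s}l^{*}(u)$ is nondecreasing and, by upper right-continuity of $l^{*}$, right-continuous with left limits $\sup_{\tau\le u<s}l^{*}(u)$, it has at most countably many discontinuities, so $\sup_{\tau\le u\le s}l^{*}(u)=\sup_{\tau\le u<s}l^{*}(u)$ for Lebesgue-a.e.\ $s$.

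\textbf{First condition in (\ref{FOCs}).} By (i)--(ii) and the strict monotonicity of $c\mapsto\pi_c(x,c)$ from Assumption \ref{AssProfit}, $\pi_c(X^{x}(s),C^{y,\nu^{*}}(s))\le\pi_c(X^{x}(s),\sup_{\tau\le u\le s}l^{*}(u))$ for a.e.\ $s\ge\tau$ (the left-hand side is itself well defined and integrable against $e^{-rs}ds$ since $\kappa\le\pi_c$, so it is nonnegative and dominated by the integrable right-hand side). Multiplying by $e^{-rs}$, integrating over $(\tau,\infty)$, taking $\mathbb{E}[\,\cdot\,|\,\mathcal{F}_{\tau}]$ and invoking the backward equation (\ref{backward}) gives $\nabla\mathcal{J}_{x,y}(\nu^{*})(\tau)\le e^{-r\tau}-e^{-r\tau}=0$ for every $\tau\in\mathcal{T}$.

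\textbf{Second condition in (\ref{FOCs}).} Since $\nabla\mathcal{J}_{x,y}(\nu^{*})\le 0$ and $d\nu^{*}\ge 0$, it suffices to prove that $\nabla\mathcal{J}_{x,y}(\nu^{*})(t)=0$ for $d\nu^{*}$-a.e.\ $t$. The Stieltjes measure $d\nu^{*}$ is carried by the (random) set of times at which $t\mapsto\sup_{0\le s<t}l^{*}(s)$ strictly increases \emph{and} exceeds $y$; a routine but fiddly analysis of the structure of running suprema (distinguishing atoms of $d\nu^{*}$, where a strict new record $l^{*}(t)>\sup_{0\le u<t}l^{*}(u)$ occurs, from the continuous part, and noting the possibly exceptional set is $d\nu^{*}$-null) shows that for $d\nu^{*}$-a.e.\ $t$ one has $C^{y,\nu^{*}}(s)=\sup_{t\le u\le s}l^{*}(u)$ for a.e.\ $s\ge t$. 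Feeding this identity into (\ref{backward}) with $\tau=t$ gives $\mathbb{E}\{\int_t^{\infty}e^{-rs}\pi_c(X^{x}(s),C^{y,\nu^{*}}(s))\,ds\,|\,\mathcal{F}_t\}=e^{-rt}$, i.e.\ $\nabla\mathcal{J}_{x,y}(\nu^{*})(t)=0$; the possible initial jump at $t=0^{+}$, present when $l^{*}(0)>y$, is handled in the same way using (\ref{backward}) at $\tau=0$. Making this ``flat-off'' step fully rigorous --- precisely identifying the support of $d\nu^{*}$, upgrading the pointwise identities to a $d\nu^{*}$-a.s.\ (optional) statement via a section argument, and treating the edge cases --- is the delicate point of the proof, and I would carry it out along the by-now-standard lines of Bank--El Karoui \cite{BankElKaroui} and Bank--Riedel \cite{BankRiedel1} (see also \cite{Bank}, \cite{RiedelSu}), using upper right-continuity of $l^{*}$ and left-continuity of $\nu^{*}$. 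With both conditions in (\ref{FOCs}) established, Theorem \ref{FOCsthm} finishes the proof.
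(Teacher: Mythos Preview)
The paper does not actually prove this proposition: its ``proof'' consists solely of a reference to Riedel and Su \cite{RiedelSu}, Theorem~3.2. Your proposal reconstructs exactly the argument that reference (and the related works \cite{BankElKaroui}, \cite{BankRiedel1}, \cite{Bank}) carries out: verify admissibility of $\nu^{*}$, then check the two first order conditions (\ref{FOCs}) by exploiting the backward equation (\ref{backward}) together with the running-supremum structure of $C^{y,\nu^{*}}$, so that Theorem~\ref{FOCsthm} concludes. So your approach is correct and is precisely the one the paper defers to; the only difference is that you spell it out while the paper merely cites it. Your honest flagging of the ``flat-off'' step as the delicate point is apt --- that is indeed where the work lies in \cite{RiedelSu} and \cite{Bank} --- and your indication of how to handle it (support analysis of $d\nu^{*}$ via upper right-continuity of $l^{*}$) is the standard route.
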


\begin{pf}
See, for example, \cite{RiedelSu}, Theorem $3.2$.
\end{pf}

In the literature on stochastic, irreversible investment problems (cf.
\cite{KaratzasBaldursson,Chiarolla4,Chiarolla2} and
\cite{CF}, among others), or more generally on singular stochastic
control problems of monotone follower type
(see, e.g., \cite{Bank,KaratzasElKarouiSkorohod,KaratzasShreve84}), it is well
known that to a monotone control problem one may associate a suitable
optimal stopping problem whose optimal solution, $\tau^{*}$, is related
to the optimal control, $\nu^{*}$, by the simple relation $\tau
^{*}=\inf
\{t \geq0\dvtx \nu^{*}(t) > 0\}$.
Economically, it means that a firm's manager has to decide how to
optimally invest or, equivalently, when to profitably exercise the
investment option.
Indeed, if we introduce for any $\nu\in\mathcal{S}_o$ the level
passage times $\tau^{\nu}(q):=\inf\{t \geq0\dvtx \nu(t) > q\}$, $q
\geq
0$, then for every $x \in\mathcal{I}$ and $y \geq0$ we may write
(cf., e.g., \cite{KaratzasBaldursson}, Lemma $2$)
\begin{eqnarray*}
\mathcal{J}_{x,y}(\nu) - \mathcal{J}_{x,y}(0)& = & \int
_y^{\infty}\mathbb{E} \biggl\{\int
_{\tau
^{\nu
}(z-y)}^{\infty} e^{-rs}\pi_c
\bigl(X^{x}(s), z\bigr) \,ds - e^{-r \tau^{\nu
}(z-y)} \biggr\} \,dz
\\
& \leq& \int_y^{\infty} \sup_{\tau
\geq0}
\mathbb{E} \biggl\{\int_{\tau}^{\infty} e^{-rs}
\pi _c\bigl(X^{x}(s), z\bigr) \,ds - e^{-r \tau}
\biggr\} \,dz
\\
& = & \int_y^{\infty} \mathbb {E} \biggl\{ \int
_{0}^{\infty} e^{-rs}\pi_c
\bigl(X^{x}(s), z\bigr) \,ds \biggr\} \,dz
\\
& & {}- \int_y^{\infty} \inf_{\tau\geq0}
\mathbb {E} \biggl\{\int_{0}^{\tau}
e^{-rs}\pi_c\bigl(X^{x}(s), z\bigr) \,ds +
e^{-r \tau} \biggr\} \,dz.
\end{eqnarray*}
Therefore, if a process $\nu^{*} \in\mathcal{S}_o$ is such that its
level passage times are optimal for the previous optimal stopping
problems, then $\nu^{*}$ must be optimal for problem (\ref{valuefunction}).
Hence,
%
\begin{equation}
\label{v} v(x,y):=\inf_{\tau\geq0}\mathbb{E} \biggl\{\int
_0^{\tau} e^{-rs}\pi _c
\bigl(X^{x}(s),y\bigr) \,ds + e^{-r \tau} \biggr\}
\end{equation}
is the optimal timing problem naturally associated to the optimal
investment problem (\ref{valuefunction}).
Notice that $v(x,y) \leq1$, for all $x \in\mathcal{I}$ and $y > 0$,
and that the mapping $y \mapsto v(x,y)$ is nonincreasing for any $x \in
\mathcal{I}$, because $\pi(x,\cdot)$ is strictly concave.
We may now define the continuation region
%
\begin{equation}
\label{continuation} \mathcal{C}:=\bigl\{(x,y) \in\mathcal{I} \times(0,\infty)\dvtx
v(x,y) < 1\bigr\}
\end{equation}
and the stopping region
%
\begin{equation}
\label{stopping} \mathcal{S}:= \bigl\{(x,y) \in\mathcal{I} \times(0,\infty)\dvtx
v(x,y) = 1\bigr\}.
\end{equation}
Intuitively, $\mathcal{S}$ is the region in which it is optimal to
invest immediately, whereas $\mathcal{C}$ is the region in which it is
profitable to delay the investment option. The nonincreasing property
of $y \mapsto v(x,y)$ implies that $\mathcal{S}$ is below $\mathcal{C}$
and, therefore, that
%
\begin{equation}
\label{boundary} b(x):= \sup\bigl\{y > 0\dvtx v(x,y)=1\bigr\},\qquad x \in\mathcal{I},
\end{equation}
is the boundary between these two regions, that is, the free-boundary.

%
\begin{Assumptions}
\label{xpicnondecr}
The mapping $x \mapsto\pi_c(x,c)$ is nondecreasing for any $c \in
(0,\infty)$.
\end{Assumptions}

Notice that, if $\pi$ were twice continuously differentiable, then
Assumption~\ref{xpicnondecr} would mean that $\pi$ is supermodular.
In \cite{RiedelSu}, Section $5$, supermodularity of the profit function
has been used to derive comparative statics results for the base
capacity process $l^{*}$. It is easy to see that Cobb--Douglas and CES
profit functions are supermodular on $(0,\infty) \times(0, \infty)$.
Condition \ref{xpicnondecr} has also a reasonable economic meaning (see
also the discussion in \cite{MehriZervos}, page~844, in the context of
a stochastic, reversible investment problem). Indeed, if the process
$X$ models the uncertain status of the market as, for example, the
price of or the demand for the produced good, then it seems natural to
imagine that marginal profits are positively affected by improving
market conditions.

\begin{proposition}
\label{vincreasingprop}
Under Assumptions \ref{AssProfit} and \ref{xpicnondecr}, $x \mapsto
v(x,y)$ is nondecreasing for any $y >0$.
\end{proposition}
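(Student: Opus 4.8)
The plan is to deduce the monotonicity of $x\mapsto v(x,y)$ from a pathwise comparison between the diffusions started at different points, combined with Assumption \ref{xpicnondecr}. Fix $y>0$ and two initial states $x_1\le x_2$ in $\mathcal{I}$. The first step is to note that $X^{x_1}$ and $X^{x_2}$ are both strong solutions of (\ref{SDE}) driven by the \emph{same} Brownian motion $W$, and that pathwise uniqueness holds under (\ref{YamadaWatanabe})--(\ref{YamadaWatanabe2}); hence the comparison theorem for one-dimensional stochastic differential equations (see, e.g., \cite{KaratzasShreve}, Proposition $5.2.18$) applies and gives $X^{x_1}(t)\le X^{x_2}(t)$ for all $t\ge 0$, $\mathbb{P}$-a.s. (If $\mathcal{I}$ has accessible endpoints, the comparison is run up to the relevant first exit times; by regularity of $X$ this does not affect what follows.)

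Next I would invoke Assumption \ref{xpicnondecr}, namely the monotonicity of $x\mapsto\pi_c(x,c)$, to transfer the pathwise ordering to the instantaneous marginal profit: on the coupling probability space,
\beq
\pi_c(X^{x_1}(s),y)\le\pi_c(X^{x_2}(s),y),\qquad\mbox{for all }s\ge 0,\ \ \mathbb{P}\mbox{-a.s.}
\eeq
Since, by Assumption \ref{AssProfit}(2), both processes $(\omega,s)\mapsto\pi_c(X^{x_i}(\omega,s),y)$, $i=1,2$, are $\mathbb{P}\otimes e^{-rs}ds$-integrable, for every stopping time $\tau\in\mathcal{T}$ the (finite) expectations satisfy
\beq
\mathbb{E}\bigg\{\int_0^{\tau}e^{-rs}\pi_c(X^{x_1}(s),y)\,ds+e^{-r\tau}\bigg\}\ \le\ \mathbb{E}\bigg\{\int_0^{\tau}e^{-rs}\pi_c(X^{x_2}(s),y)\,ds+e^{-r\tau}\bigg\},
\eeq
the discounting term $e^{-r\tau}$ being the same for both starting points.

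Finally, the admissible class $\mathcal{T}$ of stopping times is the one generated by $W$ and hence is independent of the starting point $x$. Since $f(\tau)\le g(\tau)$ for every $\tau$ implies $\inf_{\tau}f(\tau)\le\inf_{\tau}g(\tau)$, taking the infimum over $\tau\in\mathcal{T}$ on both sides of the previous inequality gives $v(x_1,y)\le v(x_2,y)$, which is the asserted monotonicity.

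The only step demanding genuine care is the pathwise comparison: one must verify that the Yamada--Watanabe conditions (\ref{YamadaWatanabe})--(\ref{YamadaWatanabe2}), which a priori only guarantee pathwise uniqueness, suffice to invoke the comparison theorem for the two solutions of (\ref{SDE}) started at $x_1$ and $x_2$, and one must deal with a possible explosion or with accessible boundary points of $\mathcal{I}$ by localization. The remaining ingredients --- the pointwise inequality for $\pi_c$ and the interchange of $\le$ with $\inf_{\tau}$ --- are routine.
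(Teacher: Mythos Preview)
Your proof is correct and follows essentially the same approach as the paper's: both rely on the Yamada--Watanabe comparison theorem to get the pathwise ordering $X^{x_1}(t)\le X^{x_2}(t)$, then apply Assumption \ref{xpicnondecr} to order the integrands, and conclude via the infimum. The only cosmetic difference is that the paper fixes an optimal stopping time $\tau^{*}$ for one initial point and plugs it into the other, whereas you compare the functionals at every $\tau$ before taking the infimum---your version is in fact slightly cleaner, as it does not require the existence of an optimal $\tau^{*}$.
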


\begin{pf}
For $y>0$, take $x_1 > x_2$, $x_1,x_2 \in\mathcal{I}$, let $\tau^{*}
\in\mathcal{T}$ be optimal for $(x_1,y)$ and $\theta\in\mathcal{T}$
be a generic stopping time. Then
\begin{eqnarray*}
\hspace*{-4pt}& & v(x_1,y) - v(x_2,y) \\
\hspace*{-4pt}&&\qquad\geq\mathbb{E} \biggl\{\int
_0^{\tau^{*}} e^{-rs} \pi_c
\bigl(X^{0,x_1}(s), y\bigr)\,ds + e^{-r\tau^{*}} - \int_0^{\theta} \pi_c
\bigl(X^{0,x_2}(s), y\bigr) \,ds - e^{-r\theta} \biggr\},
\end{eqnarray*}
for any $\theta\in\mathcal{T}$. Take now $\theta\equiv\tau^{*}$
to obtain
\[
v(x_1,y) - v(x_2,y) \geq\mathbb{E} \biggl\{\int
_0^{\tau^{*}} e^{-rs} \bigl[\pi
_c\bigl(X^{0,x_1}(s), y\bigr) - \pi_c
\bigl(X^{0,x_2}(s), y\bigr)\bigr] \,ds \biggr\} \geq0,
\]
since $x \mapsto X^{x}(t)$ is a.s. increasing for any $t \geq0$ due
to the Yamada--Watanabe comparison theorem (see, e.g., \cite
{KaratzasShreve}, Propositions $5.2.13$ and $5.2.18$) thanks to our
conditions (\ref{YamadaWatanabe}) and (\ref{YamadaWatanabe2}).
\end{pf}

\begin{corollary}
\label{bincreasingprop}
Let Assumptions \ref{AssProfit} and \ref{xpicnondecr} hold. Then, the
free-boundary $b(\cdot)$ between the continuation region and the
stopping region is nondecreasing for any $x \in\mathcal{I}$.
\end{corollary}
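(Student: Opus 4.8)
The plan is to deduce the monotonicity of $b(\cdot)$ directly from the monotonicity of the value function $v$ established in Proposition \ref{vincreasingprop}, combined with the already noted facts that $y \mapsto v(x,y)$ is decreasing and that $v(x,y) \leq 1$. First I would observe that, for fixed $x \in \mathcal{I}$, the section $\{y>0 : v(x,y)=1\}$ is a left-neighbourhood of the origin: if $y < b(x)$, then by the definition of $b(x)$ as a supremum there exists $y' \in (y, b(x)]$ with $v(x,y')=1$, and since $y \mapsto v(x,y)$ is decreasing and bounded above by $1$ we get $1 \geq v(x,y) \geq v(x,y') = 1$, i.e.\ $v(x,y)=1$. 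In particular $(0,b(x)) \subseteq \{y>0 : v(x,y)=1\} \subseteq (0,b(x)]$, so it suffices to control membership in this set for $y$ strictly below the boundary.

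Next, fix $x_1 > x_2$ in $\mathcal{I}$. If $b(x_2)=0$ there is nothing to prove, so assume $b(x_2)>0$ and pick any $y \in (0,b(x_2))$. By the previous step $v(x_2,y)=1$, and by Proposition \ref{vincreasingprop} one has $v(x_1,y) \geq v(x_2,y) = 1$; since also $v(x_1,y) \leq 1$, it follows that $v(x_1,y)=1$, and hence $y \leq b(x_1)$ by the definition (\ref{boundary}) of the free-boundary. As $y$ was an arbitrary point of $(0,b(x_2))$, taking the supremum over such $y$ gives $b(x_2) \leq b(x_1)$, which is the assertion.

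I do not anticipate any genuine obstacle here: the statement is essentially a one-line consequence of Proposition \ref{vincreasingprop} once the structure of the vertical sections of the stopping region $\mathcal{S}$ is made explicit. The only points deserving a word of care — namely the boundary cases $b(x_2)=0$ or $b(x_2)=+\infty$, and the fact that the supremum in (\ref{boundary}) need not be attained — are circumvented by running the argument for every $y<b(x_2)$ and then passing to the limit $y \uparrow b(x_2)$, rather than testing the value at $y=b(x_2)$ directly.
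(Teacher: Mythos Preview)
Your argument is correct and is precisely the approach the paper intends: the paper's own proof simply invokes Proposition \ref{vincreasingprop} and defers to the standard argument in \cite{Jacka}, Proposition $2.2$, which is exactly the vertical-section reasoning you have written out in detail. You have faithfully filled in what the paper leaves to a reference, including the minor care needed for the cases $b(x_2)=0$ and for the supremum in (\ref{boundary}) possibly not being attained.
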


\begin{pf}
Use the result of Proposition~\ref{vincreasingprop} and arguments
similar to those in~\cite{Jacka}, proof of Proposition $2.2$.
\end{pf}

The next theorem gives us a new representation for the base capacity
$l^{*}$ in our setting.

%
\begin{theorem}
\label{identificocor}
Let $l^{*}$ be the unique optional solution of (\ref{backward}) and
$b(\cdot)$ the free-boundary defined in (\ref{boundary}).
Under Assumptions \ref{AssProfit}, \ref{rbiggerthankappa} and \ref
{xpicnondecr}, one has
%
\begin{equation}
\label{identifico} l^{*}(t) = b\bigl(X^{x}(t)\bigr).
\end{equation}
\end{theorem}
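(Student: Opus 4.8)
The plan is to read the identity (\ref{identifico}) directly off the explicit construction of the base capacity carried out in the proof of Proposition \ref{existenceback}, rather than verifying that $t\mapsto b(X^{x}(t))$ solves the backward equation (\ref{backward}). Recall from that proof that $l^{*}(t)=-1/\xi^{*}(t)$, where $\xi^{*}(t)=\sup\{l\in\mathbb{R}:\Xi^{l}(t)=Y(t)\}$, $Y(t)=e^{-rt}$, and $\Xi^{l}$ is the process (\ref{definizionexil}) built from the data (\ref{identification})--(\ref{identificationf}). The single observation that does all the work is that, for $l<0$, once the factor $e^{-rt}$ coming from $\mu(ds)=e^{-rs}\,ds$ and from $Y$ is pulled out, $\Xi^{l}(t)$ is nothing but $e^{-rt}$ times the value of the optimal timing problem (\ref{v}) evaluated at the running state of $X^{x}$ with capacity parameter $-1/l$; that is,
\[
\Xi^{l}(t)=e^{-rt}\,v\big(X^{x}(t),-\tfrac1l\big),\qquad l<0,
\]
which follows from the time-homogeneity of the SDE (\ref{SDE}) and the strong Markov property of $X^{x}$, since the $\essinf$ over $\mathcal{F}_{\cdot}$-stopping times $\tau\geq t$ in (\ref{definizionexil}) becomes, after the usual time shift, the infimum over stopping times in (\ref{v}).

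Granting this, the remainder is bookkeeping. Since Proposition \ref{existenceback} guarantees $\xi^{*}(t)<0$ a.s., the supremum defining $\xi^{*}(t)$ effectively ranges over $l<0$, so $\xi^{*}(t)=\sup\{l<0:v(X^{x}(t),-1/l)=1\}$. I would then substitute $y=-1/l$, an increasing bijection of $(-\infty,0)$ onto $(0,\infty)$, and use that $y\mapsto v(x,y)$ is decreasing (strict concavity of $\pi(x,\cdot)$) together with the definition (\ref{boundary}) of $b$: the admissible $y$ form an interval $(0,b(X^{x}(t))\,]$ (or $(0,b(X^{x}(t)))$), so by monotonicity and continuity of $y\mapsto-1/y$ one gets $\xi^{*}(t)=-1/b(X^{x}(t))$, whence $l^{*}(t)=b(X^{x}(t))$. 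Along the way one records that $b(X^{x}(t))$ is a.s. finite and strictly positive — automatic, being the reciprocal of the a.s. finite, strictly negative real $\xi^{*}(t)$ from Proposition \ref{existenceback} — and regularity of $X^{x}$ then propagates finiteness of $b$ to all of $\mathcal{I}$.

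The one genuinely delicate step is the identification $\Xi^{l}(t)=e^{-rt}v(X^{x}(t),-1/l)$: one must pass rigorously from the $\essinf$ over $\mathcal{F}_{\cdot}$-stopping times in (\ref{definizionexil}) to the deterministic $\inf$ over stopping times defining $v$, which needs the time-homogeneity of (\ref{SDE}), the strong Markov property, a standard shifting argument, and just enough joint measurability of the data in $(x,l)$ for $v$ to be a Borel function of $X^{x}(t)$ and the parameter. One also has to justify restricting to the branch $l<0$ of $f$ in (\ref{identificationf}); this is precisely where the strict negativity of $\xi^{*}$ from Proposition \ref{existenceback} (hence Assumption \ref{rbiggerthankappa}) is used, and it is the reason the constant $\kappa$ drops out of the final computation. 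Everything else — monotonicity of $v$ in $y$, the stopping region lying below the continuation region — has already been established above; Assumption \ref{xpicnondecr} enters only insofar as Corollary \ref{bincreasingprop} then tells us the $b(\cdot)$ obtained here is nondecreasing, which is what is needed in the sequel.
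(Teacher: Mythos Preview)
Your proposal is correct and follows essentially the same route as the paper: both start from $l^{*}(t)=-1/\xi^{*}(t)$ and the threshold representation of $\xi^{*}$, identify $\Xi^{l}(t)=e^{-rt}v(X^{x}(t),-1/l)$ for $l<0$ via time-homogeneity and the strong Markov property, and then perform the change of variable $y=-1/l$ together with the definition (\ref{boundary}) of $b$. The only difference is cosmetic: the paper makes the Markov/shifting step explicit through a canonical probability space construction (splitting $\overline{\omega}=(\overline{\omega}_1,\overline{\omega}_2)$ and using independence of Brownian increments), whereas you invoke it as a ``standard shifting argument''; the final sup/inf bookkeeping is carried out in the paper by pure algebra on the set $\{l<0:v(X^{x}(t),-1/l)=1\}$ rather than by appealing to monotonicity of $v(x,\cdot)$, but the two computations are equivalent.
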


\begin{pf}
First of all notice that the right-hand side of (\ref{identifico}) is
an optional process as well as $l^{*}$, being $b(\cdot)$ a
Borel-measurable function (since monotone) and $X$ optional. To prove
(\ref{identifico}) recall that $l^{*}(t) = - \frac{1}{\xi^{*}(t)}$
(cf. proof of Proposition~\ref{existenceback}) and that the process
$\xi^{*}$ admits the representation (cf. \cite{BankElKaroui}, formula
$(23)$ on page $1049$)
\[
\label{rappresentoxistar} \xi^{*}(t) = \sup \biggl\{ l < 0 \dvtx
\essinf_{\tau\geq t}\mathbb {E} \biggl\{ \int_t^{\tau}e^{-rs}
\pi_c\biggl(X^{x}(s), -\frac{1}{l}\biggr) \,ds +
e^{-r\tau
} \Big|\mathcal{F}_t \biggr\} = e^{-rt} \biggr
\}.
\]

To take care of the previous conditional expectation, we adapt the
arguments of~\cite{Chiarolla4}, proof of Theorem $4.1$. Let
$(\overline
{\Omega}, \overline{\mathbb{P}})$ be the canonical probability space
where $\overline{\mathbb{P}}$ is the Wiener measure on $\overline
{\Omega
}:=\mathcal{C}_0([0,\infty); \mathbb{R}^2)$, the space of all
continuous functions from $[0,\infty)$ to $\mathbb{R}^2$ which are zero
at $t=0$. We denote by $W(t,\overline{\omega})=\overline{\omega}(t)$
the coordinate mapping on $\mathcal{C}_0([0,\infty); \mathbb{R}^2)$,
with $\overline{\omega}(t):= (\overline{\omega}_1,\overline{\omega
}_2)(t)$, $\overline{\omega}_1:=\{W(u), 0 \leq u \leq t\}$ and
$\overline{\omega}_2:=\{W(u)-W(t), u \geq t\}= \{W'(u), u\geq0\}$.
Independence of Brownian increments induces a product measure on
$\mathcal{C}_0([0,\infty); \mathbb{R}^2)=C_{0}([0,t]; \mathbb
{R})\times
C_{0}([t,\infty); \mathbb{R})$. Then $\tau(\overline{\omega
}_1,\overline
{\omega}_2)= t + \tau'_{\overline{\omega}_1}(\overline{\omega}_2)$
[where for each $\overline{\omega}_1$, $\tau'_{\overline{\omega
}_1}(\cdot)$ is a stopping time with respect to $\{\mathcal
{F}^{W'}_u\}
_{u \geq0}$] and we may write
\begin{eqnarray*}
\label{takecare} & & \mathbb{E} \biggl\{\int_t^{\tau}e^{-rs}
\pi_c\biggl(X^{x}(s), -\frac
{1}{l}\biggr)\,ds +
e^{-r\tau} \Big|\mathcal{F}_t \biggr\}
\\
& & \qquad= e^{-rt} \mathbb{E} \biggl\{\int_0^{\tau'_{\overline{\omega}_1}}
e^{-r u}\pi_c\biggl(X^{t,X^{x}(t)}(u + t), -
\frac{1}{l}\biggr) \,du + e^{-r \tau
'_{\overline{\omega}_1}} \Big|\mathcal{F}_t \biggr
\}
\\
& &\qquad = e^{-rt} \mathbb{E}_{\overline{\omega}_2} \biggl\{\int
_0^{\tau
'_{\overline{\omega}_1}} e^{-r u}\pi_c
\biggl(X^{t,X^{x}(t)}(u+t), -\frac
{1}{l}\biggr) \,du + e^{-r \tau'_{\overline{\omega}_1}}
\biggr\}
\\
& & \qquad= e^{-rt} \Psi\bigl(X^{x}(t); \tau'_{\overline{\omega}_1}
\end{eqnarray*}
for any $\overline{\omega}_1$ fixed, for some $\Psi$ and where
$\mathbb
{E}_{\overline{\omega}_2}\{\cdot\}$ denotes the expectation over
$\overline{\omega}_2$ or~$W'$. But now $X$ is a time-homogeneous
diffusion, hence
\[
\Psi\bigl(z;\tau'_{\overline{\omega}_1}\bigr)= \mathbb{E} \biggl\{\int
_0^{\tau
'_{\overline{\omega}_1}} e^{-r u}\pi_c
\biggl(X^{0,z}(u), -\frac{1}{l}\biggr) \,du + e^{-r \tau'_{\overline{\omega}_1}} \biggr
\},
\]
for any $\overline{\omega}_1$ given and fixed, and thus
\begin{eqnarray*}
\label{reprxistar} \xi^{*}(t)& = & \sup \biggl\{ l < 0 \dvtx
\essinf_{\tau\geq t}\mathbb{E} \biggl\{\int_t^{\tau}e^{-rs}
\pi _c\biggl(X^{x}(s), -\frac{1}{l}\biggr) \,ds +
e^{-r\tau}\Big |\mathcal{F}_t \biggr\} = e^{-rt} \biggr
\}
\\
& = & \sup \biggl\{ l < 0 \dvtx v\biggl(X^{x}(t), -\frac{1}{l}
\biggr) = 1 \biggr\},
\nonumber
\end{eqnarray*}
with $v$ as in (\ref{v}).

Finally, since $l^{*}(t) = - \frac{1}{\xi^{*}(t)}$ (cf. proof of
Proposition~\ref{existenceback}), we may write for $y>0$
\begin{eqnarray*}
\label{reprellestar} l^{*}(t)& = & - \frac{1}{\sup
\{ l
< 0 \dvtx v(X^{x}(t), -{1}/{l}) = 1 \}}
\\
& = & \frac{1}{-\sup \{ -
{1}/{y} < 0 \dvtx v(X^{x}(t), y) = 1 \}}
\\
& = & \frac{1}{\inf \{ {1}/{y}
> 0 \dvtx v(X^{x}(t), y) = 1 \}}
\\
& = & \sup \bigl\{ y > 0 \dvtx v\bigl(X^{x}(t), y\bigr) = 1 \bigr\},
\end{eqnarray*}
and then the thesis follows by (\ref{boundary}).
\end{pf}

\begin{remark}
The result of Theorem~\ref{identificocor} still holds if one introduces
depreciation in the production capacity dynamics as in \cite{RiedelSu};
that is, if
\[
C^{y,\nu}(t)=-\rho C^{y,\nu}(t) \,dt + d\nu(t),\qquad C^{y,\nu}(0)=y
\geq0,
\]
for some $\rho>0$.
Moreover, in this case, one has (cf. also \cite{RiedelSu}, Theorem~3.2)
\[
\nu^{*}(t)=\int_{[0,t)}e^{-\rho s} \,d\overline{
\nu}^{*}(s)\qquad\mbox{with } \overline{\nu}^{*}(t)= \sup
_{0 \leq s < t} \biggl(\frac
{b(X^{x}(s)) - ye^{-\rho s}}{e^{-\rho s}} \biggr) \vee0
\]
and $\overline{\nu}^{*}(0)=0$.
\end{remark}

Theorem~\ref{identificocor} clarifies why in the literature (cf. \cite
{BankRiedel1,CFR} or \cite{RiedelSu}, among others) one usually
refers to $l^{*}$ as a ``desirable value of capacity'' that the
controller aims to maintain in a ``minimal way.'' Indeed, as in the
classical monotone follower problems (see, e.g., \cite
{KaratzasElKarouiSkorohod} and \cite{KaratzasShreve84}), the optimal
investment policy $\nu^{*}$ (cf. Proposition~\ref{optimalsolthm}) is
the solution of a Skorohod problem being the least effort needed to
reflect the production capacity at the moving (random) boundary
$l^{*}(t)=b(X^{x}(t))$, that is,
\[
\nu^{*}(t)= \sup_{0 \leq s < t}\bigl(b\bigl(X^{x}(s)
\bigr) - y\bigr) \vee0,\qquad t>0, \nu^{*}(0)=0.
\]

The result of Theorem~\ref{identificocor} resembles those of \cite
{BankFollmer} and \cite{BankBaumgarten} in which the connection between
the solution of a Bank--El Karoui representation problem and a suitable
exercising boundary for parameter-dependent optimal stopping problems
has been pointed out.
In particular, in \cite{BankBaumgarten} the authors consider the
optimal stopping problem $\sup_{\tau\geq0}\mathbb{E}\{e^{-r\tau
}(u(X^x(\tau))-k)\}$ where $X$ is a regular, one-dimensional diffusion
and $k$ a real parameter which affects linearly the gain function.
Under some additional uniform integrability conditions on $X$, they can
show that the solution $K$ of an associated representation problem is
given by $K(t)=\gamma(X(t))$, with $\gamma(\cdot)$ the free-boundary on
the $(x,k)$-plane (see also \cite{ElKarouiFollmer}, Sections $4$ and
$5$). Moreover, $\gamma(\cdot)$ is characterized in terms of the
infimum of an auxiliary function of one variable that can be determined
from the Laplace transforms of level passage times for~$X$.

When our marginal profit $\pi_c$ is multiplicatively separable [i.e.,
$\pi_c(x,c)=f(x)g(c)$, as in the Cobb--Douglas case], it is not hard to
see that our optimal stopping problem (\ref{v}) may be reduced to that
studied in \cite{BankBaumgarten} (set $u(x):=\mathbb{E}\{\int_0^{\infty
}e^{-rs}f(X^{x}(s))\,ds\}$ and $k:=1/g(y)$ to obtain by the strong Markov
property $v(x,y)= g(y)[ u(x) - \sup_{\tau\geq0}\mathbb{E}\{
e^{-r\tau
}(u(X^x(\tau))-k)\}]$). However, we shall start from the identification
(\ref{identifico}) to find, by (\ref{backward}) and by purely
probabilistic arguments, an integral equation for the free-boundary
(cf. Theorem~\ref{integraleqthm} below) which holds for a very general
class of concave profit functions not necessarily multiplicatively
separable. That is, for example, the case of a CES (constant elasticity
of substitution) profit that we will discuss in Section~\ref{Examples}.

\begin{theorem}
\label{integraleqthm}
Let Assumptions \ref{AssProfit}, \ref{rbiggerthankappa} and \ref
{xpicnondecr} hold. Denote by $\mathcal{G}$ the infinitesimal generator
associated to $X^{x}$, and by $\psi_r(x)$ the increasing solution to
the equation $\mathcal{G}u = ru$. Moreover, let $m(dx)$ and $s(dx)$ be
the speed measure and the scale function measure, respectively,
associated to the diffusion $X^{x}$. Then the free-boundary $b(\cdot)$
between the continuation region and the stopping region is the unique
positive nondecreasing solution to the integral equation
%
\begin{equation}
\label{integraleq} \psi_r(x)\int_{x}^{\overline{x}}
\biggl(\int_{\underline{x}}^z \pi _c
\bigl(y,b(z)\bigr) \psi_r(y) m(dy) \biggr) \frac{s(dz)}{\psi_r^2(z)} = 1.
\end{equation}
\end{theorem}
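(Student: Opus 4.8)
The plan is to evaluate the backward equation (\ref{backward}) at the deterministic time $\tau=0$, to translate it through the identification $l^{*}(t)=b(X^{x}(t))$ of Theorem \ref{identificocor} into a statement about the diffusion and its running maximum, and then to apply the result of \cite{Borodin} on the joint law of a regular one-dimensional diffusion and its running supremum sampled at an independent exponential time. Taking $\tau\equiv 0$ in (\ref{backward}) (so that the conditioning is trivial) and using $l^{*}(u)=b(X^{x}(u))$ together with the monotonicity of $b$ (Corollary \ref{bincreasingprop}) and the path-continuity of $X^{x}$, one has $\sup_{0\le u\le s}l^{*}(u)=b\big(\sup_{0\le u\le s}X^{x}(u)\big)=b(M^{x}(s))$, where $M^{x}(s):=\sup_{0\le u\le s}X^{x}(u)$; hence
\beq
\label{plan1}
\mathbb{E}\bigg\{\int_0^{\infty}e^{-rs}\,\pi_c\big(X^{x}(s),\,b(M^{x}(s))\big)\,ds\bigg\}=1.
\eeq

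Next I would introduce a random variable $\Theta$, exponentially distributed with parameter $r$ and independent of $W$ (enlarging the space if necessary). Since $\Theta$ is independent of $X^{x}$ and has density $t\mapsto re^{-rt}$, Tonelli's theorem (recall $\pi_c>0$) gives $\mathbb{E}\{\int_0^{\infty}e^{-rs}F(s)\,ds\}=\tfrac1r\,\mathbb{E}\{F(\Theta)\}$, so (\ref{plan1}) becomes $\mathbb{E}_x\{\pi_c(X^{x}(\Theta),b(M^{x}(\Theta)))\}=r$. Denoting by $\phi_r$ the decreasing solution of $\mathcal{G}u=ru$, the joint law in question (the content of \cite{Borodin}) is
\beq
\label{plan2}
\mathbb{P}_x\big(X^{x}(\Theta)\in dy,\ M^{x}(\Theta)\in dz\big)=r\,\psi_r(x)\,\psi_r(y)\,m(dy)\,\frac{s(dz)}{\psi_r^{2}(z)},\qquad \underline{x}<y\le z,\ x\le z<\overline{x}.
\eeq
One can also derive (\ref{plan2}) by writing $\mathbb{E}_x\{g(X^{x}(\Theta))\mathds{1}_{\{M^{x}(\Theta)<z\}}\}=r\int_{\underline{x}}^{z}g(y)\,G_r^{z}(x,y)\,m(dy)$ with $G_r^{z}$ the $r$-Green function of $X^{x}$ killed at the level $z$, and differentiating in $z$: the boundary term vanishes because the solution $\psi_r(z)\phi_r(\cdot)-\phi_r(z)\psi_r(\cdot)$ entering $G_r^{z}$ vanishes at $z$, while $-d_z\big(\phi_r(z)/\psi_r(z)\big)$ is proportional to $s(dz)/\psi_r^{2}(z)$, and one uses $\psi_r(x\wedge y)\psi_r(x\vee y)=\psi_r(x)\psi_r(y)$. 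Since $b(M^{x}(\Theta))=b(z)$ on $\{M^{x}(\Theta)=z\}$, substituting (\ref{plan2}) into $\mathbb{E}_x\{\pi_c(X^{x}(\Theta),b(M^{x}(\Theta)))\}=r$ yields exactly (\ref{integraleq}) after dividing by $r$. That $b$ is \emph{positive} follows from Proposition \ref{existenceback} (apply Theorem \ref{identificocor} with initial point $x$ and read it at $t=0$: $b(x)=l^{*}(0)>0$), and $b$ is nondecreasing by Corollary \ref{bincreasingprop}; so a positive nondecreasing solution exists.

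For uniqueness, let $\tilde b$ be any positive nondecreasing solution and set $H(z):=\int_{\underline{x}}^{z}\pi_c(y,\tilde b(z))\,\psi_r(y)\,m(dy)$, so that (\ref{integraleq}) reads $\int_x^{\overline{x}}H(z)\,s(dz)/\psi_r^{2}(z)=1/\psi_r(x)$. Differentiating in the scale variable (recall that $\psi_r$ is $s$-convex, with $s$-derivative $d\psi_r/ds$) reduces this to the pointwise relation $H(x)=(d\psi_r/ds)(x)$, the missing integration constant being pinned down by the fact that both sides of the original equation tend to $0$ as $x\uparrow\overline{x}$. Since $c\mapsto\pi_c(y,c)$ is strictly decreasing by Assumption \ref{AssProfit}, the map $c\mapsto\int_{\underline{x}}^{x}\pi_c(y,c)\,\psi_r(y)\,m(dy)$ is strictly decreasing, hence injective, so for each $x$ the relation $H(x)=(d\psi_r/ds)(x)$ determines $\tilde b(x)$ uniquely; as $b$ satisfies the same relation, $\tilde b\equiv b$. (Alternatively, running the argument of the previous paragraph backwards and using the strong Markov property of $X^{x}$ shows that $\tilde b(X^{x}(\cdot))$ solves the backward equation (\ref{backward}) for every $\tau\in\mathcal{T}$, hence equals $l^{*}$ by the uniqueness in Proposition \ref{existenceback}, whence $\tilde b=b$ on $\mathcal{I}$ by regularity of $X^{x}$.)

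The step I expect to be the main obstacle is the rigorous use of \cite{Borodin}'s joint-law formula (\ref{plan2}) in the generality of Section \ref{problem}: one must verify it, and the vanishing of the relevant boundary term when the killed $r$-resolvent is differentiated in the level $z$, for every boundary classification of $\underline{x}$ and $\overline{x}$ compatible with (\ref{LI}). A secondary technical point is the justification of the $s$-differentiation in the uniqueness argument — or, in the probabilistic alternative, the upper-right-continuity of $t\mapsto\tilde b(X^{x}(t))$ needed to place it in the uniqueness class of Proposition \ref{existenceback}.
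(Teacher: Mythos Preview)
Your proposal is correct and follows essentially the same route as the paper: the identification $l^{*}(t)=b(X^{x}(t))$, the monotonicity of $b$ to pass $\sup$ inside, the recasting via an independent exponential time, and the application of the joint-law formula of \cite{Borodin}. The only organizational difference is that you evaluate (\ref{backward}) at $\tau\equiv 0$, whereas the paper keeps a general $\tau\in\mathcal{T}$ and invokes the strong Markov property to reduce to $\mathbb{E}_x\{\pi_c(X(\tau_r),b(M(\tau_r)))\}=r$; your ``alternative'' uniqueness argument (run the computation backwards for all $\tau$ via strong Markov and appeal to the uniqueness of $l^{*}$ in Proposition \ref{existenceback}) is exactly what the paper does, while your pointwise differentiation argument $H(x)=(d\psi_r/ds)(x)$ is an extra, more hands-on route that the paper does not pursue.
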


\begin{pf}
Since $l^{*}$ uniquely solves (\ref{backward}) and $l^{*}(t) =
b(X^{x}(t))$ (cf. Theorem~\ref{identificocor}), then $b(\cdot)$ satisfies
%
\begin{eqnarray}
\label{eqint1} r & = & \mathbb{E} \biggl\{\int_{\tau
}^{\infty}r
e^{-r(s-\tau)}\pi_c\Bigl(X^{x}(s), \sup
_{\tau\leq u <
s}b\bigl(X^{x}(u)\bigr)\Bigr) \,ds\Big |
\mathcal{F}_{\tau} \biggr\}
\nonumber
\\[-8pt]
\\[-8pt]
\nonumber
& = & \mathbb{E} \biggl\{\int_{0}^{\infty
}r
e^{-rt}\pi_c\Bigl(X^{x}(t + \tau), b\Bigl(\sup
_{0 \leq u < t}X^{x}(u + \tau)\Bigr)\Bigr) \,dt \Big|
\mathcal{F}_{\tau} \biggr\},
\end{eqnarray}
for any $\tau\in\mathcal{T}$, where in the second equality we have
used the fact that $b(\cdot)$ is nondecreasing by Corollary~\ref
{bincreasingprop}.
Now, by the strong Markov property, (\ref{eqint1}) amounts to find
$b(\cdot)$ such that
\[
\label{eqint2} \mathbb{E}_x \biggl\{\int_{0}^{\infty}r
e^{-rt}\pi_c\Bigl(X(t), b\Bigl(\sup_{0 \leq
u < t}X(u)
\Bigr)\Bigr) \,dt \biggr\} = r;
\]
that is, such that
\[
\label{eqint3} \mathbb{E}_x \bigl\{\pi_c\bigl(X(
\tau_r), b\bigl(M(\tau_r)\bigr)\bigr) \bigr\} = r,
\]
where $M(t):=\sup_{0 \leq s \leq t}X(s)$ and $\tau_r$ denotes an
independent exponentially distributed random time with parameter $r$.
Integral equation (\ref{integraleq}) now follows since for a
one-dimensional regular diffusion $X$ (cf. \cite{Borodin}, page $185$)
one has
\[
\mathbb{P}_x\bigl(X(\tau_r) \in dy, M(
\tau_r) \in dz\bigr) = r\frac{\psi
_r(x)\psi
_r(y)}{\psi_r^2(z)}m(dy)s(dz),\qquad y\leq z, x
\leq z.
\]

Finally, uniqueness of a positive, nondecreasing $b(\cdot)$ satisfying
(\ref{integraleq}) can be proved arguing by contradiction as follows.
Assume there exist two positive, nondecreasing solutions $b_1(\cdot)$
and $b_2(\cdot)$ of (\ref{integraleq}) such that $b_1(x_o) \neq
b_2(x_o)$ for some $x_o \in\mathcal{I}$. Then, proceeding backward
from (\ref{integraleq}), one finds two positive, optional processes
$l^{*}_1(t):=b_1(X^x(t))$ and $l^{*}_2(t):=b_2(X^x(t))$ both solving
(\ref{backward}). By Proposition~\ref{existenceback}, we should have
$l^{*}_1$ and $l^{*}_2$ indistinguishable. But now $X$ is regular, and
thus the set $\{\omega\in\Omega\dvtx  \tau_{x_o}(\omega) < +\infty\}$,
with $\tau_{x_o}:=\inf\{t \geq0\dvtx X^x(t)=x_o\}$, has positive
probability for any $x$ in the interior of $\mathcal{I}$. It follows
that $l^{*}_1$ and $l^{*}_2$ are not indistinguishable and such a
contradiction completes the proof.
\end{pf}

Notice that if one deals with an optimal stopping problem of type (\ref
{v}), the common approach consists in writing down the associated
free-boundary problem for the value function $v$ and the boundary $b$
and try to solve it on a case by case basis.
Alternatively, one could rely on an integral representation for the
value function and the free-boundary which follows from the local
time--space calculus for semimartingales on continuous surfaces of
Peskir \cite{Peskir}. The latter, indeed, may be seen as the
probabilistic counterpart of the free-boundary problem. However, for
both of these two approaches one needs regularity of $v$, smooth-fit
property or a priori continuity of $b$.

Our integral equation (\ref{integraleq}), instead, follows immediately
from the backward equation (\ref{backward}) for $l^{*}(t)=b(X^{x}(t))$,
thanks to (\ref{identifico}) and the strong Markov property of $X$.
Therefore, it does not require any regularity of the value function,
smooth-fit property or a priori continuity of $b(\cdot)$ itself to be
applied. It thus represents an extremely useful tool to determine the
free-boundary of the whole class of infinite time horizon, singular
stochastic irreversible investment problems of type (\ref
{valuefunction}). As we shall see in the next section, equation (\ref
{integraleq}) may be analytically solved even in some nontrivial cases.\looseness=-1

\section{Explicit results}
\label{Examples}

In this section, we aim to explicitly solve the integral equation (\ref
{integraleq}) when the economic shock $X^{x}$ is a geometric Brownian
motion, a~three-dimensional Bessel process and a CEV (constant
elasticity of volatility) process. We shall find the free-boundary
$b(\cdot)$ of the optimal stopping problem (\ref{v}) for Cobb--Douglas
and CES (constant elasticity of substitution) operating profit
functions, that is, for $\pi(x,c) = \frac{x^{\alpha}c^{\beta
}}{\alpha+
\beta}$ with $\alpha, \beta\in(0,1)$, and $\pi(x,c)= (x^{{1}/{n}}
+ c^{{1}/{n}})^n$, $n\geq2$, respectively.

To the best of our knowledge, this is the first time that the
free-boundary of a singular stochastic control problem of type (\ref
{valuefunction}) with a CES profit function is explicitly determined
for underlying given by a three-dimensional Bessel process or by a CEV process.

\subsection{The case of a Cobb--Douglas operating profit}

Throughout this section, assume that the operating profit function is
of Cobb--Douglas type, that is, $\pi(x,c) = \frac{x^{\alpha}c^{\beta
}}{\alpha+ \beta}$ for $\alpha, \beta\in(0,1)$. According to
Assumption~\ref{rbiggerthankappa}, we take $r>0$.

\subsubsection{Geometric Brownian motion}
\label{GBMCD}
Let $X^{x}(t) = xe^{(\mu-({1}/{2})\sigma^2)t + \sigma W(t)}$,\break $x >0$,
with $\sigma^2 > 0$ and $\mu\in\mathbb{R}$.
If we denote by $\delta:=\frac{\mu}{\sigma^2} - \frac{1}{2}$, then it
is well known (cf., e.g., \cite{BorSal}) that
\[
m(dx)=\frac{2}{\sigma^2}x^{2\delta-1}\,dx
\]
and
\[
\label{scaleBG} s(dx):= \cases{ %
x^{-2\delta-1}\,dx,&\quad $\delta\neq0,$
\vspace*{2pt}\cr
\displaystyle\frac{1}{x}\,dx, &\quad $\delta=0.$ }
\]
Finally, the ordinary differential equation $\mathcal{G}u = ru$, that
is, $\frac{1}{2}\sigma^2 x^2 u''(x) + \mu x u'(x) = ru$, admits the
increasing solution
\[
\psi_r(x)= x^{\gamma_1},
\]
where $\gamma_1$ is the positive root of the equation $\frac
{1}{2}\sigma
^2\gamma(\gamma-1) + \mu\gamma= r$.
%

\begin{proposition}
For any $\delta\in\mathbb{R}$ and $x > 0$, one has
%
\begin{equation}
\label{bGBM} b(x) = K_{\delta} x^{{\alpha}/{(1-\beta)}},
\end{equation}
with $K_{\delta}:=  [\sigma^2\gamma_1(\alpha+ \gamma_1 +
2\delta
) (\frac{\alpha+ \beta}{2\beta} ) ]^{-{1}/{(1
-\beta)}}$.
\end{proposition}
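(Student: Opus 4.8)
The plan is to plug the explicit data for geometric Brownian motion—the speed measure $m(dx)$, the scale measure $s(dx)$, and the increasing eigenfunction $\psi_r(x)=x^{\gamma_1}$—together with the Cobb-Douglas marginal profit $\pi_c(x,c)=\frac{\beta}{\alpha+\beta}x^{\alpha}c^{\beta-1}$ into the integral equation (\ref{integraleq}) of Theorem \ref{integraleqthm}, make the ansatz $b(z)=K z^{\frac{\alpha}{1-\beta}}$ for a constant $K>0$ to be determined, and show that with this choice the left-hand side of (\ref{integraleq}) collapses to a constant that equals $1$ precisely when $K=K_{\delta}$. By uniqueness of the positive nondecreasing solution (Theorem \ref{integraleqthm}) this establishes (\ref{bGBM}).

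Concretely, I would first compute the inner integral $\int_{\underline{x}}^{z}\pi_c(y,b(z))\psi_r(y)\,m(dy)$ with $\underline{x}=0$, $\overline{x}=+\infty$. With the ansatz, $\pi_c(y,b(z))=\frac{\beta}{\alpha+\beta}y^{\alpha}K^{\beta-1}z^{\frac{\alpha(\beta-1)}{1-\beta}}=\frac{\beta}{\alpha+\beta}K^{\beta-1}y^{\alpha}z^{-\alpha}$, so the $z$-dependence factors out as $z^{-\alpha}$ and the $y$-integrand becomes a pure power $y^{\alpha+\gamma_1}\cdot y^{2\delta-1}=y^{\alpha+\gamma_1+2\delta-1}$ against $\frac{2}{\sigma^2}dy$; integrating from $0$ to $z$ gives (up to the constant $\frac{2}{\sigma^2(\alpha+\gamma_1+2\delta)}$) a factor $z^{\alpha+\gamma_1+2\delta}$. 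Multiplying by $z^{-\alpha}$ and by $\frac{s(dz)}{\psi_r^2(z)}=\frac{z^{-2\delta-1}}{z^{2\gamma_1}}dz=z^{-2\gamma_1-2\delta-1}dz$ leaves $z^{\gamma_1+2\delta+\alpha-\alpha-2\gamma_1-2\delta-1}=z^{-\gamma_1-1}$; integrating this from $x$ to $\infty$ gives $\frac{1}{\gamma_1}x^{-\gamma_1}$, which cancels against the outer prefactor $\psi_r(x)=x^{\gamma_1}$. The net result is a constant — no residual $x$-dependence — which is exactly the self-consistency check confirming the power-law ansatz; setting that constant equal to $1$ and solving for $K$ yields $K_{\delta}=\big[\sigma^2\gamma_1(\alpha+\gamma_1+2\delta)\frac{\alpha+\beta}{2\beta}\big]^{-\frac{1}{1-\beta}}$.

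I should separately handle the borderline case $\delta=0$, where $s(dx)=\frac{1}{x}dx$; but note $x^{-2\delta-1}dx$ reduces to $x^{-1}dx$ at $\delta=0$, so the computation goes through verbatim with $\delta=0$ substituted, and no special argument is needed beyond observing the formulas match. I would also note that, since $\gamma_1>0$, $r>0$, $\alpha\in(0,1)$, and (by Assumption \ref{xpicnondecr}) the relevant quantities are positive, the constant $K_{\delta}$ is well-defined and positive, and $b(x)=K_{\delta}x^{\frac{\alpha}{1-\beta}}$ is genuinely increasing in $x$, so it is the unique solution guaranteed by Theorem \ref{integraleqthm}.

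The main obstacle I anticipate is purely bookkeeping: one must verify convergence of the two improper integrals (the inner one at $0$, the outer one at $\infty$) and that this forces constraints automatically satisfied under Assumptions \ref{AssProfit} and \ref{rbiggerthankappa}—in particular $\alpha+\gamma_1+2\delta>0$, which is needed for integrability near $0$ and is also precisely the positivity condition making $K_{\delta}$ real. Since $\gamma_1$ is the positive root of $\tfrac12\sigma^2\gamma(\gamma-1)+\mu\gamma=r$ and $r>0$, one has $\gamma_1>1$ when $\mu\le 0$ and in general $\gamma_1>2\delta+1$ can be read off from the characteristic equation, so $\alpha+\gamma_1+2\delta>0$ holds; I would include this short verification. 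Everything else is a routine chain of power-function integrations, so the proof is essentially a guided computation confirming the ansatz and pinning down the constant.
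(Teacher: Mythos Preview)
Your proposal is correct and follows essentially the same route as the paper: plug the GBM data $m(dx)$, $s(dx)$, $\psi_r(x)=x^{\gamma_1}$ and the Cobb--Douglas marginal profit into the integral equation (\ref{integraleq}), make the power-law ansatz $b(z)=Kz^{\alpha/(1-\beta)}$, reduce both integrals to elementary power integrals, and solve for the constant. One small slip: the inequality you quote, ``$\gamma_1>2\delta+1$'', is not the one needed for integrability near $0$; what you actually require (and what does follow from the characteristic equation, since $\gamma_1=-\delta+\sqrt{\delta^2+2r/\sigma^2}>|\delta|-\delta\ge 0$ gives $\gamma_1+2\delta>0$) is $\alpha+\gamma_1+2\delta>0$.
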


\begin{pf}
Let us start with the case $\delta\neq0$. For any $x>0$ by (\ref
{integraleq}), we have
\[
\int_x^{\infty} \biggl(\int_0^z
y^{\alpha+ \gamma_1 + 2\delta-1} \,dy \biggr) b^{\beta-1}(z) z^{-2\delta- 1 -2\gamma_1} \,dz =
x^{-\gamma
_1} \biggl(\frac{\alpha+ \beta}{2\beta} \biggr)\sigma^2;
\]
that is,
\[
\int_x^{\infty} b^{\beta-1}(z) z^{\alpha- \gamma-1}
\,dz = \sigma ^2(\alpha+ \gamma_1 + 2\delta) \biggl(
\frac{\alpha+ \beta}{2\beta
} \biggr)x^{-\gamma_1}.
\]
Take now $b(z) = (A_{\delta} z)^{{\alpha}/{(1-\beta)}}$, for some
constant $A_{\delta}$, to obtain
\[
A_{\delta}^{-\alpha} \int_x^{\infty}
z^{-\gamma_1-1} \,dz = \frac
{A_{\delta}^{-\alpha}}{\gamma_1}x^{-\gamma_1} = \sigma^2(
\alpha+ \gamma _1 + 2\delta) \biggl(\frac{\alpha+ \beta}{2\beta}
\biggr)x^{-\gamma_1},
\]
which is satisfied by $A_{\delta}:=  [\sigma^2\gamma_1(\alpha+
\gamma_1 + 2\delta) (\frac{\alpha+ \beta}{2\beta}
)
]^{-{1}/{\alpha}}$. Hence,
$b(x)= K_{\delta}x^{{\alpha}/{(1-\beta)}}$ with $K_{\delta
}:=A_{\delta
}^{{\alpha}/{(1-\beta)}}$.
Similar calculations also apply to the case $\delta=0$ to have $b(x) =
K_{0} x^{{\alpha}/{(1-\beta)}}$.
\end{pf}

\subsubsection{Three-dimensional Bessel process}
\label{Bessel2}

Let now $X^{x}$ be a three-dimen\-sional Bessel process, that is, the
strong solution of
\[
dX^{x}(t) = \frac{1}{X^{x}(t)}\,dt + dW(t),\qquad X^{x}(0)=x>0.
\]
It is a diffusion with state space $(0,\infty)$, generator $\mathcal
{G}:=\frac{1}{2}\frac{d}{dx^2} + \frac{1}{x}\frac{d}{dx}$ and scale and
speed measures given by $s(dx) = x^{-2}\,dx$ and $m(dx)=2x^2 \,dx$,
respectively (cf.~\cite{JYC}, Chapter~VI). Further, since $X^{x}(t)$
may be characterized as a killed Brownian motion at zero, conditioned
never to hit zero, the three-dimensional Bessel process may be viewed
as an excessive transform of a killed Brownian motion with excessive
function $h(x)=x$, that is, the scale function of the Brownian motion.
Therefore, $\psi_r(x) = \frac{\sinh{(\sqrt{2r}x)}}{x}$ (cf. \cite{JYC},
Chapter VI or \cite{Borodin}, Section $3.2$, among others).

\begin{remark}
Notice that the first of (\ref{YamadaWatanabe}) is not satisfied in
this case. However, that condition is not necessary to obtain our
Theorem~\ref{integraleqthm}. In fact, we only need that $X$ is a
diffusion process for which a comparison result holds true. One can see
that this fact is satisfied by a three-dimensional Bessel process since
it is the squared root of a squared Bessel process for which a
comparison result (cf. the Yamada--Watanabe theorem, e.g., \cite
{KaratzasShreve}, Propositions 5.2.13 and 5.2.18) holds.
\end{remark}

The following result holds.

%
\begin{proposition}
For any $x>0$, one has
%
\begin{equation}
\label{FBBesselCD} b(x) = \biggl[ \biggl(\frac{\alpha+ \beta}{2\beta} \biggr) x^2
\frac
{\psi
'_r(x)}{g(x)} \biggr]^{-{1}/{(1-\beta)}},
\end{equation}
where $\psi'_r(x)$ denotes the first derivative of the increasing
function $\psi_r(x) = \frac{\sinh{(\sqrt{2r}x)}}{x}$, and
$g(x):=\int_0^x y^{\alpha+1}\sinh{(\sqrt{2r}y)}\,dy$.
\end{proposition}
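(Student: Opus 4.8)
The plan is to apply the general integral equation (\ref{integraleq}) of Theorem \ref{integraleqthm} with the explicit ingredients of the three-dimensional Bessel process, namely $m(dy) = 2y^2\,dy$, $s(dz) = z^{-2}\,dz$, $\psi_r(z) = \sinh(\sqrt{2r}z)/z$, and the Cobb-Douglas marginal profit $\pi_c(y,c) = \frac{\beta}{\alpha+\beta}\,y^{\alpha}c^{\beta-1}$. Since $b(\cdot)$ will turn out to be nondecreasing, the uniqueness assertion of Theorem \ref{integraleqthm} guarantees that any nondecreasing positive function satisfying (\ref{integraleq}) is \emph{the} free-boundary, so it suffices to exhibit one.

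First I would substitute these quantities into (\ref{integraleq}). The inner integral becomes
\[
\int_0^z \pi_c(y,b(z))\,\psi_r(y)\,m(dy) = \frac{2\beta}{\alpha+\beta}\,b(z)^{\beta-1}\int_0^z y^{\alpha}\,\frac{\sinh(\sqrt{2r}y)}{y}\,y^2\,dy = \frac{2\beta}{\alpha+\beta}\,b(z)^{\beta-1}\,g(z),
\]
with $g(z) = \int_0^z y^{\alpha+1}\sinh(\sqrt{2r}y)\,dy$ as in the statement. Hence (\ref{integraleq}) reads
\[
\frac{2\beta}{\alpha+\beta}\,\frac{\sinh(\sqrt{2r}x)}{x}\int_x^{\infty} b(z)^{\beta-1}\,g(z)\,\frac{z^2}{\sinh^2(\sqrt{2r}z)}\,\frac{dz}{z^2} = 1,
\]
i.e.
\[
\int_x^{\infty} b(z)^{\beta-1}\,g(z)\,\frac{dz}{\sinh^2(\sqrt{2r}z)} = \Big(\frac{\alpha+\beta}{2\beta}\Big)\frac{x}{\sinh(\sqrt{2r}x)}.
\]
Next I would differentiate both sides with respect to $x$ (the left side has a locally integrable integrand, the right side is $C^1$ on $(0,\infty)$), obtaining
\[
-\,b(x)^{\beta-1}\,\frac{g(x)}{\sinh^2(\sqrt{2r}x)} = \Big(\frac{\alpha+\beta}{2\beta}\Big)\frac{d}{dx}\Big(\frac{x}{\sinh(\sqrt{2r}x)}\Big) = -\Big(\frac{\alpha+\beta}{2\beta}\Big)\,x^{2}\,\psi_r'(x)\cdot\frac{1}{\text{(something)}},
\]
where the key observation is that $\frac{d}{dx}\big(x/\sinh(\sqrt{2r}x)\big) = -x^2\,\psi_r'(x)$ with $\psi_r(x) = \sinh(\sqrt{2r}x)/x$; indeed $\psi_r'(x) = \big(\sqrt{2r}x\cosh(\sqrt{2r}x) - \sinh(\sqrt{2r}x)\big)/x^2$, while $\frac{d}{dx}\big(x/\sinh(\sqrt{2r}x)\big) = \big(\sinh(\sqrt{2r}x) - \sqrt{2r}x\cosh(\sqrt{2r}x)\big)/\sinh^2(\sqrt{2r}x) = -x^2\psi_r'(x)/\sinh^2(\sqrt{2r}x)$. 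Cancelling the common factor $1/\sinh^2(\sqrt{2r}x)$ yields
\[
b(x)^{\beta-1}\,g(x) = \Big(\frac{\alpha+\beta}{2\beta}\Big)\,x^{2}\,\psi_r'(x),
\]
and solving for $b(x)$ (using $\beta - 1 < 0$) gives exactly (\ref{FBBesselCD}).

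The remaining points are routine but worth noting: I would check that the differentiation step is reversible, i.e.\ that the function defined by (\ref{FBBesselCD}) actually solves the integral equation and not merely its derivative — this follows by integrating back and verifying the boundary behaviour as $x\to\infty$, using $g(x)\sim$ (const)$\,x^{\alpha}e^{\sqrt{2r}x}$ and $\psi_r(x)\sim$ (const)$\,e^{\sqrt{2r}x}/x$, so that $x/\sinh(\sqrt{2r}x)\to 0$; and I would verify that $b(\cdot)$ in (\ref{FBBesselCD}) is positive and nondecreasing, so that the uniqueness clause of Theorem \ref{integraleqthm} applies and identifies it with the free-boundary. Monotonicity is guaranteed abstractly by Corollary \ref{bincreasingprop} once we know the expression is the free-boundary, but one can also see it directly from the fact that $x^2\psi_r'(x)/g(x)$ is decreasing in $x$ (both $\psi_r'$ and $g$ grow like $e^{\sqrt{2r}x}$, and the lower-order terms arrange the monotonicity). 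The only mild obstacle is bookkeeping the constant $\frac{\beta}{\alpha+\beta}$ coming from $\pi_c$ and making sure the algebraic rearrangement of the $\sinh$, $\cosh$ terms is done correctly; conceptually the proof is a direct substitution-and-differentiate argument.
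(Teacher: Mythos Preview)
Your proposal is correct and follows essentially the same route as the paper: substitute the Bessel data and Cobb--Douglas marginal profit into (\ref{integraleq}), differentiate in $x$ using the identity $\frac{d}{dx}\big(x/\sinh(\sqrt{2r}x)\big)=-x^{2}\psi_r'(x)/\sinh^{2}(\sqrt{2r}x)$, and solve for $b(x)^{\beta-1}$. The only substantive difference is that the paper carries out the monotonicity check of $x\mapsto b(x)$ in full (via an integration-by-parts identity for $g$ and a sign analysis of an auxiliary function), whereas you defer this to a sketch; note that invoking Corollary~\ref{bincreasingprop} here would be circular, since you need monotonicity of your candidate \emph{before} the uniqueness clause of Theorem~\ref{integraleqthm} lets you identify it with the free-boundary, so the direct verification is not optional.
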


\begin{pf}
From integral equation (\ref{integraleq}), we may write
\begin{eqnarray*}
\label{BesselCD1} \biggl(\frac{\alpha+ \beta}{2\beta} \biggr)
\frac{x}{\sinh{(\sqrt {2r}x)}} & = & \int
_x^{\infty
} \biggl(\int_0^z
y^{\alpha+1}\sinh{(\sqrt{2r}y)}\,dy \biggr) \frac{b^{\beta
-1}(z)}{\sinh^2{(\sqrt{2r}z)}} \,dz
\nonumber
\\
& = & \int_x^{\infty} g(z) \frac
{b^{\beta-1}(z)}{\sinh^2{(\sqrt{2r}z)}} \,dz,
\end{eqnarray*}
with $g(x):= \int_0^x y^{\alpha+1}\sinh{(\sqrt{2r}y)}\,dy$. By
differentiating, one obtains
%
\begin{eqnarray}
\label{BesselCD2} b^{\beta-1}(x) & = & \biggl(\frac
{\alpha+ \beta}{2\beta} \biggr)
\frac{[x\sqrt{2r}\cosh{(\sqrt {2r}x)} -
\sinh{(\sqrt{2r}x)}]}{g(x)}
\nonumber
\\[-8pt]
\\[-8pt]
\nonumber
& = & \biggl(\frac{\alpha+ \beta
}{2\beta} \biggr) x^2 \frac{\psi'_r(x)}{g(x)},
\end{eqnarray}
that is
\[
b(x)= \biggl[ \biggl(\frac{\alpha+ \beta}{2\beta} \biggr) x^2
\frac
{\psi
'_r(x)}{g(x)} \biggr]^{-{1}/{(1-\beta)}}.
\]
Notice that $b(\cdot)$ is positive since $\psi_r(\cdot)$ is increasing
and $g(\cdot)$ is positive.

To complete the proof, it now suffices to check that the mapping $x
\mapsto b(x)$ is actually nondecreasing as suggested by Proposition~\ref
{bincreasingprop}, that is, $x \mapsto b^{\beta-1}(x)$ is nonincreasing.
From (\ref{BesselCD2}), we have
%
\begin{eqnarray}
\label{derivataFBCD}\qquad \frac{d}{dx}b^{\beta-1}(x) & = & \biggl(
\frac{\alpha+ \beta}{2\beta g^2(x)} \biggr) \bigl[g(x) \bigl(2x\psi'_r(x)
+ x^2\psi^{\prime\prime}_r(x)\bigr) -
g'(x)x^2\psi'_r(x) \bigr]
\nonumber
\\[-8pt]
\\[-8pt]
\nonumber
& = & \biggl(\frac{x^2(\alpha+
\beta
)}{2\beta g^2(x)} \biggr) \bigl[2rg(x) \psi_r(x) -
g'(x) \psi '_r(x) \bigr],
\end{eqnarray}
since $\psi_r(x)$ solves $\frac{1}{2}\psi^{\prime\prime}_r(x) + \frac
{1}{x}\psi
'_r(x) = r \psi_r(x)$.
Recall now that $\psi_r(x)=\break \frac{\sinh{(\sqrt{2r}x)}}{x}$,
$g'(x)=x^{\alpha+1}\sinh{(\sqrt{2r}x)}$ and notice that, by an
integration by parts,
\[
g(x)= \int_0^x y^{\alpha+1}\sinh{(
\sqrt{2r}y)}\,dy = \frac{1}{\sqrt {2r}}x^{\alpha+1}\cosh{(\sqrt{2r}x)} -
\frac{\alpha+1}{\sqrt{2r}}I(x),
\]
with $I(x):=\int_0^x y^{\alpha}\cosh{(\sqrt{2r}y)} \,dy$. Therefore, from
(\ref{derivataFBCD}) we may write
\begin{eqnarray*}
\label{derivataFBCD2} \frac{d}{dx}b^{\beta-1}(x) & = & \biggl(
\frac{x^2(\alpha+ \beta)}{2\beta g^2(x)} \biggr)\frac{\sinh
{(\sqrt {2r}x)}}{x} \bigl[-(\alpha+1) \sqrt{2r}I(x) +
\sinh{(\sqrt {2r}x)}x^{\alpha}\bigr]
\\
& =: & \biggl(\frac{x^2(\alpha+
\beta
)}{2\beta g^2(x)} \biggr)\frac{\sinh{(\sqrt{2r}x)}}{x} T(x).
\end{eqnarray*}
Since $T(0)=0$ and $T'(x)=\alpha x^{\alpha-1}[\sinh{(\sqrt{2r}x)} -
x\sqrt{2r}\cosh{(\sqrt{2r}x)}] =\break  - \alpha x^{\alpha+1}\psi'_r(x) < 0$,
being $x \mapsto\psi_r(x)$ increasing, it follows that $x \mapsto
T(x)$ is negative for any $x>0$. The decreasing property of $x \mapsto
b^{\beta-1}(x)$ is therefore proved.
\end{pf}

A computer drawing of the free-boundary
(\ref{FBBesselCD}) is provided in Figure \ref{figuraBessel}.

\begin{figure}

\includegraphics{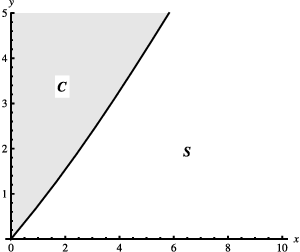}

\caption{A computer drawing of the free-boundary (\protect\ref{FBBesselCD})
when $r=\frac{1}{2}$ in the case of a three-dimensional Bessel process
and a Cobb--Douglas profit (with $\alpha=\beta=\frac{1}{2})$. The grey
area in the figure denotes the continuation (no-action) region, whereas
the white one denotes the stopping (action) region.}\vspace*{-5pt}
\label{figuraBessel}
\end{figure}

\subsubsection{CEV process}
\label{CEVD}

Let now the diffusion $X^{x}$ be of CEV (Constant Elasticity of
Variance) type, that is,
%
\begin{equation}
\label{CEV} dX^{x}(t) = r X^{x}(t) \,dt + \sigma
\bigl(X^{x}\bigr)^{1-\gamma}(t)\,dW(t),\qquad X^{x}(0) = x>0,
\end{equation}
for some $r > 0$, $\sigma>0$ and $\gamma\in(0,1/2]$. CEV process was
introduced in the financial literature by John Cox in $1975$ \cite{Cox}
in order to capture the stylized fact of a negative link between equity
volatility and equity price (the so-called ``leverage effect'').
In this case, we have
\[
m(dx)= \frac{2}{\sigma^2 x^{2(1-\gamma)}}e^{({r}/{(\gamma\sigma
^2)})x^{2\gamma}}\,dx,\qquad
 s(dx)=e^{-({r}/{(\gamma\sigma^2)})x^{2\gamma}}\,dx,
\]
and $\psi_r(x) =x$ (cf., e.g., \cite{DayaKar}, Section~6.2).

\begin{proposition}
For any $x >0$, one has
%
\begin{equation}
\label{FBCEV} b(x) = \biggl[\frac{2\beta}{\sigma^2(\alpha+ \beta)} g(x)
e^{-({r}/{(\gamma\sigma^2)})x^{2\gamma}}
\biggr]^{{1}/{(1-\beta)}},
\end{equation}
with $g(x):=\int_0^x y^{2\gamma+\alpha-1}e^{({r}/{(\gamma\sigma^2)})y^{2\gamma}}\,dy$.
\end{proposition}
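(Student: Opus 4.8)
The plan is to specialise the integral equation (\ref{integraleq}) to the CEV data, reduce it to a pointwise identity by differentiating in $x$, and then check a posteriori that the resulting candidate is positive and nondecreasing, so that the uniqueness part of Theorem \ref{integraleqthm} identifies it with the free-boundary.

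First I would insert $\psi_r(x)=x$, $\underline{x}=0$, $\overline{x}=+\infty$, the measures $m(dy)=\frac{2}{\sigma^2}y^{2\gamma-2}e^{\frac{r}{\gamma\sigma^2}y^{2\gamma}}dy$ and $s(dz)=e^{-\frac{r}{\gamma\sigma^2}z^{2\gamma}}dz$, together with $\pi_c(y,b(z))=\frac{\beta}{\alpha+\beta}y^{\alpha}b^{\beta-1}(z)$ for the Cobb-Douglas profit. This turns the inner integral in (\ref{integraleq}) into $\frac{2\beta}{\sigma^2(\alpha+\beta)}b^{\beta-1}(z)g(z)$ with $g(z)=\int_0^z y^{2\gamma+\alpha-1}e^{\frac{r}{\gamma\sigma^2}y^{2\gamma}}dy$, and (\ref{integraleq}) becomes
$$\frac{2\beta}{\sigma^2(\alpha+\beta)}\,x\int_{x}^{\infty}b^{\beta-1}(z)\,g(z)\,e^{-\frac{r}{\gamma\sigma^2}z^{2\gamma}}\,\frac{dz}{z^{2}}=1 .$$
Rewriting this as $\int_{x}^{\infty}b^{\beta-1}(z)g(z)e^{-\frac{r}{\gamma\sigma^2}z^{2\gamma}}z^{-2}dz=\frac{\sigma^2(\alpha+\beta)}{2\beta}x^{-1}$ and differentiating both sides in $x$ yields the pointwise identity $b^{\beta-1}(x)g(x)e^{-\frac{r}{\gamma\sigma^2}x^{2\gamma}}=\frac{\sigma^2(\alpha+\beta)}{2\beta}$, which is exactly (\ref{FBCEV}). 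Conversely, one checks directly that the $b$ of (\ref{FBCEV}) makes the integrand in the displayed equation identically equal to $\frac{\sigma^2(\alpha+\beta)}{2\beta}z^{-2}$, whose integral over $(x,\infty)$ is $\frac{\sigma^2(\alpha+\beta)}{2\beta}x^{-1}$, so (\ref{integraleq}) holds; moreover $b>0$ because $g>0$ on $(0,\infty)$ (note that $2\gamma+\alpha>0$ makes $g$ finite at $0$).

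The real content is to verify that $b$ is nondecreasing, as demanded by the uniqueness statement of Theorem \ref{integraleqthm}. Since $1-\beta>0$, this is equivalent to showing that $x\mapsto b^{\beta-1}(x)=\frac{\sigma^2(\alpha+\beta)}{2\beta}\,g(x)^{-1}e^{\frac{r}{\gamma\sigma^2}x^{2\gamma}}$ is nonincreasing; differentiating and using $g'(x)=x^{2\gamma+\alpha-1}e^{\frac{r}{\gamma\sigma^2}x^{2\gamma}}$, this reduces to the inequality $\frac{2r}{\sigma^2}g(x)\le x^{\alpha}e^{\frac{r}{\gamma\sigma^2}x^{2\gamma}}$ for every $x>0$. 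I would prove it by writing the integrand of $g$ as $\frac{\sigma^2}{2r}y^{\alpha}\frac{d}{dy}e^{\frac{r}{\gamma\sigma^2}y^{2\gamma}}$ and integrating by parts, obtaining
$$\frac{2r}{\sigma^2}g(x)=x^{\alpha}e^{\frac{r}{\gamma\sigma^2}x^{2\gamma}}-\alpha\int_{0}^{x}y^{\alpha-1}e^{\frac{r}{\gamma\sigma^2}y^{2\gamma}}\,dy ,$$
where the last integral is finite (here $\alpha\in(0,1)$ guarantees integrability at $0$) and strictly positive. Hence the inequality holds strictly, $b$ is strictly increasing, and being a positive nondecreasing solution of (\ref{integraleq}) it must be the free-boundary by Theorem \ref{integraleqthm}. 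I do not foresee any obstacle beyond this integration-by-parts estimate and the routine check that $g$ is finite on $(0,\infty)$; the assumption $\gamma\in(0,1/2]$ enters only implicitly, to ensure that the SDE (\ref{CEV}) is well posed and that $m$, $s$ and $\psi_r$ take the stated form.
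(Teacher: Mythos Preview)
Your proposal is correct and essentially identical to the paper's proof: both specialise (\ref{integraleq}) to the CEV data to obtain $\int_x^{\infty} g(z)\,b^{\beta-1}(z)\,z^{-2}e^{-\frac{r}{\gamma\sigma^2}z^{2\gamma}}dz=\frac{\sigma^2(\alpha+\beta)}{2\beta}x^{-1}$, read off $b^{\beta-1}(x)=\frac{\sigma^2(\alpha+\beta)}{2\beta}g(x)^{-1}e^{\frac{r}{\gamma\sigma^2}x^{2\gamma}}$, and establish monotonicity via the same integration-by-parts identity $\frac{2r}{\sigma^2}g(x)=x^{\alpha}e^{\frac{r}{\gamma\sigma^2}x^{2\gamma}}-\alpha\int_0^x y^{\alpha-1}e^{\frac{r}{\gamma\sigma^2}y^{2\gamma}}dy$. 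The only cosmetic difference is that you derive the candidate by differentiating the integral identity, whereas the paper posits it and verifies directly; both then invoke the uniqueness in Theorem \ref{integraleqthm}.
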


\begin{pf}
From (\ref{integraleq}), one has
\[
\label{FBCEV1} \int_x^{\infty} \biggl(\int
_0^z y^{2\gamma+\alpha-1}e^{({r}/{(\gamma\sigma^2)})y^{2\gamma}}\,dy
\biggr) \frac{b^{\beta-1}(z)}{z^2}e^{-({r}/{(\gamma\sigma^2)})z^{2\gamma}}\,dz
 = \frac{\sigma^2}{x} \biggl(
\frac
{\alpha+ \beta}{2\beta} \biggr),
\]
that is,
\[
\label{FBCEV2} \int_x^{\infty} g(z)
\frac{b^{\beta-1}(z)}{z^2}e^{-({r}/{(\gamma\sigma^2)})z^{2\gamma}}\,dz = \frac{\sigma^2}{x} \biggl(
\frac{\alpha+
\beta
}{2\beta} \biggr),
\]
with $g(x):=\int_0^x y^{2\gamma+\alpha-1}e^{({r}/{(\gamma\sigma^2)})y^{2\gamma}}\,dy$.
Take now
\[
b^{\beta-1}(x)= \frac{\sigma^2}{g(x)} \biggl(\frac{\alpha+ \beta
}{2\beta
}
\biggr)e^{({r}/{(\gamma\sigma^2)})x^{2\gamma}}
\]
to obtain the desired result.

To complete the proof, we shall now show that $b(x)$ as in (\ref
{FBCEV}) is nondecreasing, or, equivalently, that $x \mapsto b^{\beta
-1}(x)$ is nonincreasing.
Indeed, we have
\begin{eqnarray*}
\label{FBCEVincr} \frac{d}{dx}b^{\beta-1}(x) & = & \frac
{\sigma^2}{g^2(x)}
\biggl(\frac{\alpha+ \beta}{2\beta} \biggr)x^{2\gamma
-1}e^{({r}/{(\gamma\sigma^2)})x^{2\gamma}} \biggl[
\frac{2r}{\sigma^2}g(x) - x^{\alpha}e^{({r}/{(\gamma\sigma^2)})x^{2\gamma}} \biggr]
\\
& = & -\frac{\alpha\sigma^2}{g^2(x)} \biggl(\frac{\alpha+ \beta}{2\beta} \biggr)x^{2\gamma-1}
e^{({r}/{(\gamma\sigma^2)})x^{2\gamma}}\int_0^x y^{\alpha-1 }
e^{({r}/{(\gamma\sigma^2)})y^{2\gamma}}
\,dy < 0,
\end{eqnarray*}
being $g(x)= \frac{\sigma^2}{2r}[e^{({r}/{(\gamma\sigma^2)})x^{2\gamma
}}x^{\alpha} - \alpha\int_0^x y^{\alpha-1 }e^{({r}/
{(\gamma\sigma^2)})y^{2\gamma}} \,dy]$, thanks to an integration by parts.
\end{pf}

A computer drawing of the free-boundary
(\ref{FBCEV}) is provided in Figure \ref{figuraCEV}.

\subsection{The case of a CES operating profit}
\label{CESsubsection}

In this section, we consider a nonseparable operating profit of CES
type, that is, $\pi(x,c)= (x^{{1}/{n}} + c^{{1}/{n}})^n$,
$(x,c) \in(0,\infty) \times(0,\infty)$ and $n\geq2$. Moreover, as in
the previous section, we take $X$ given by a geometric Brownian motion,
a three-dimensional Bessel process and a CEV process, respectively.

\begin{figure}

\includegraphics{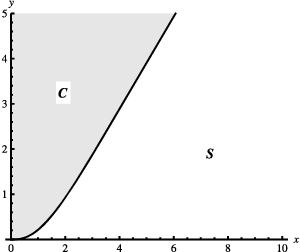}

\caption{A computer drawing of the free-boundary (\protect\ref
{FBCEV}) in the
case of a CEV process (with $\gamma=r=\frac{1}{2}$ and $\sigma=1$) and
a Cobb--Douglas profit (with $\alpha=\beta=\frac{1}{2}$). The grey area
in the figure denotes the continuation (no-action) region, whereas the
white one denotes the stopping (action) region.}\vspace*{-3pt}
\label{figuraCEV}
\end{figure}

Notice that CES operating profit does satisfy the first part of
Assumption~\ref{AssProfit} with $\kappa=1$ since $\lim_{c
\rightarrow
\infty}\pi_c(x,c)=\lim_{c \rightarrow\infty}[1 + (\frac
{x}{c})^{{1}/{n}}]^{n-1} = 1$. Then, according to Assumption~\ref
{rbiggerthankappa} here we take $r>1$.

Due to our identification $l^{*}(t)=b(X^{x}(t))$ [cf. (\ref
{identifico})], we expect that Assumption~\ref{rbiggerthankappa} might
play a role also in the optimal stopping problem (\ref{v}). Having
$r>1$ guarantees in fact that optimal stopping problem (\ref{v}) has a
nonempty continuation region.
Indeed, if the economic shock $X$ is a positive diffusion (as we will
consider in the examples below), one has
\begin{eqnarray*}
1 & \geq & \inf_{\tau\geq
0}\mathbb {E} \biggl\{\int
_0^{\tau}e^{-rs}\pi_c
\bigl(X^x(s),y\bigr)\,ds + e^{- r \tau
} \biggr\}
\nonumber
\\[-1pt]
& = & \inf_{\tau\geq0}\mathbb {E} \biggl\{\int_0^{\tau}e^{-rs}
\biggl[ \biggl(\frac{X^x(s)}{y} \biggr)^{1/n} + 1
\biggr]^{n-1}\,ds + e^{- r \tau} \biggr\}
\nonumber
\\[-1pt]
& \geq & \inf_{\tau\geq0}\mathbb {E} \biggl\{\int
_0^{\tau}e^{-rs} \biggl[(n-1) \biggl(
\frac{X^x(s)}{y} \biggr)^{1/n} + 1 \biggr]\,ds + e^{- r \tau}
\biggr\}
\nonumber
\\[-1pt]
& = & \frac{1}{r} + \inf_{\tau
\geq
0}\mathbb{E} \biggl\{\int
_0^{\tau}e^{-rs}(n-1) \biggl(
\frac
{X^x(s)}{y} \biggr)^{1/n} \,ds + \biggl(\frac{r-1}{r} \biggr)
e^{- r \tau} \biggr\},
\end{eqnarray*}
where we have used the generalized Bernoulli inequality for the third step.
If now $r \leq1$, the two terms in the last expected value above are
increasing functions of $\tau$ and, therefore, it is always optimal to
stop immediately, that is, $\tau^*=0$ for any $(x,y) \in(0,\infty)
\times(0,\infty)$, and thus $\mathcal{C}=\varnothing$.

Before starting with our examples, we also need a preliminary lemma
that will be useful in the following.
%

\begin{lemma}
\label{Dini}
Take $n \geq2$ and positive continuously differentiable functions $\{
\alpha_{k,n}\}_{1 \leq k \leq n-1}$ and $h$ on $(0,\infty)$. Then, for
any $x > 0$, the polynomial equation of order $n-1$ for the unknown $f_n(x)$
%
\begin{equation}
\label{equationDini1} \sum_{k=1}^{n-1} %
\pmatrix{ n-1 \vspace*{2pt}
\cr
k } %
\alpha_{k,n}(x)
f_n^k(x) - h(x) = 0,
\end{equation}
admits a unique positive solution. Moreover, $f_n(\cdot)$ is
continuously differentiable on $(0,\infty)$.
\end{lemma}

\begin{pf}
The existence of a unique positive solution to (\ref{equationDini1})
follows from a straightforward application of Descartes' rule of signs.
To show that such a solution $f_n(\cdot)$, $n \geq2$, is continuously
differentiable on $(0,\infty)$, define the function $\Phi_n\dvtx (0,\infty)
\times(0,\infty) \mapsto\mathbb{R}$ by
%
\begin{equation}
\label{Diniderivability} \Phi_n(x,y):=\sum_{k=1}^{n-1}
\pmatrix{ n-1 \vspace*{2pt}
\cr
k } %
\alpha_{k,n}(x)
y^k - h(x),\qquad n \geq2.
\end{equation}
By the first part of this proof, we already know that for any arbitrary
but fixed $x_o > 0$ there exists a unique positive $f_n(x_o)$ such that
$\Phi_n(x_o,f_n(x_o))=0$. Moreover, $\frac{\partial{\Phi
_n}}{\partial
y}(x_o,f_n(x_o))>0$ because $\alpha_{k,n}$ and $f_n$ are positive. Then
$f_n$ is continuously differentiable in a suitable neighborhood of
$x_o$, by the implicit function theorem. Since $x_o > 0$ was arbitrary,
it follows that $f_n$ is continuously differentiable on $(0,\infty)$.
\end{pf}

\subsubsection{Geometric Brownian motion}
As in Section~\ref{GBMCD}, let $X^{x}$ be a geometric Brownian motion
with drift $\mu\in\mathbb{R}$ and volatilty $\sigma>0$.

%
\begin{proposition}
\label{boundaryGBMCES}
Define $\delta:=\frac{\mu}{\sigma^2} - \frac{1}{2}$, $\gamma_1$
as the
positive root of the equation $\frac{1}{2}\sigma^2\gamma(\gamma-1) +
\mu\gamma= r$ and $\theta:=\gamma_1 + 2\delta$.
Then, for any $x>0$ and $n \geq2$ one has
%
\begin{equation}
\label{boundaryGBMCESformula} b_n(x)= \biggl(\frac{1}{C_n}
\biggr)^n x,
\end{equation}
where $C_n$ is the unique positive constant solving
%
\begin{equation}
\label{equation2F1} F_{2,1}\bigl(-(n-1), n\theta; n\theta+ 1;
-C_n\bigr) = r,
\end{equation}
with $F_{2,1}$ the ordinary hypergeometric function (see, e.g., \cite
{AbrSteg}, Chapter~15, for details).
\end{proposition}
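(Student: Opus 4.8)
The plan is to substitute the geometric Brownian motion data into the integral equation (\ref{integraleq}), expand the CES marginal profit by the binomial theorem, test (\ref{integraleq}) against the \emph{linear} ansatz $b_n(z) = (1/C_n)^n z$, and reduce everything to a single scalar equation for $C_n$ which is then recognised as a terminating Gauss hypergeometric identity; uniqueness will come from Theorem \ref{integraleqthm} on one side and from Lemma \ref{Dini} on the other.

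Concretely, I would recall from Section \ref{GBMCD} that $\psi_r(x) = x^{\gamma_1}$, $m(dy) = \frac{2}{\sigma^2}y^{2\delta-1}dy$, $s(dz) = z^{-2\delta-1}dz$ (the latter valid for $\delta = 0$ as well) on $(\underline{x},\overline{x}) = (0,\infty)$, and that for the CES profit $\pi_c(y,c) = \big(1 + (y/c)^{1/n}\big)^{n-1} = \sum_{k=0}^{n-1}\binom{n-1}{k}(y/c)^{k/n}$. Inserting $b_n(z) = (1/C_n)^n z$ (so $b_n(z)^{-k/n} = C_n^k z^{-k/n}$) and integrating termwise, the inner integral in (\ref{integraleq}) becomes $\frac{2}{\sigma^2}z^{\theta}\,S(C_n)$ with $S(C_n) := \sum_{k=0}^{n-1}\binom{n-1}{k}\frac{C_n^k}{\theta + k/n}$ a constant and $\theta := \gamma_1 + 2\delta$; the termwise integration over $(0,z)$ is licit precisely because $\theta > 0$, which I will obtain for free below. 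The outer integrand then reduces to $z^{\theta-2\delta-1-2\gamma_1} = z^{-\gamma_1-1}$, $\int_x^\infty z^{-\gamma_1-1}dz = x^{-\gamma_1}/\gamma_1$ cancels $\psi_r(x) = x^{\gamma_1}$, and (\ref{integraleq}) collapses to the $x$-free identity $\frac{2}{\sigma^2\gamma_1}S(C_n) = 1$.

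The remaining ingredients are: (i) with the Pochhammer symbol $(a)_k$ one has $\frac{(-(n-1))_k}{k!} = (-1)^k\binom{n-1}{k}$ and $\frac{(n\theta)_k}{(n\theta+1)_k} = \frac{\theta}{\theta+k/n}$, so the series $F_{2,1}(-(n-1),n\theta;n\theta+1;\cdot)$ terminates at $k = n-1$ and equals $\theta\,S(C_n)$ when evaluated at $-C_n$; and (ii) the identity defining $\gamma_1$, namely $\frac{\sigma^2}{2}\gamma_1\big(\gamma_1 - 1 + \tfrac{2\mu}{\sigma^2}\big) = r$, i.e.\ $\frac{\sigma^2\gamma_1\theta}{2} = r$ since $\theta = \gamma_1 - 1 + \tfrac{2\mu}{\sigma^2}$, which also yields $\theta = 2r/(\sigma^2\gamma_1) > 0$. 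Combining, $\frac{2}{\sigma^2\gamma_1}S(C_n) = 1 \Leftrightarrow \theta S(C_n) = r \Leftrightarrow F_{2,1}(-(n-1),n\theta;n\theta+1;-C_n) = r$, which is (\ref{equation2F1}). For existence and uniqueness of a positive root I would rewrite the condition as $\sum_{k=1}^{n-1}\binom{n-1}{k}\frac{C_n^k}{\theta+k/n} = \frac{r-1}{\theta}$ (using $\frac{\sigma^2\gamma_1}{2} = r/\theta$), note the right side is positive since $r > 1$, and invoke Lemma \ref{Dini} with $\alpha_{k,n}\equiv(\theta+k/n)^{-1}$ and $h\equiv(r-1)/\theta$. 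Then $b_n(x) = (1/C_n)^n x$ is positive and nondecreasing, hence it is \emph{the} free-boundary by the uniqueness part of Theorem \ref{integraleqthm}.

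I expect the only genuinely delicate bookkeeping to concern $\theta > 0$ (and hence termwise integrability at the endpoint $0$, and finiteness of all integrals, tied to Assumption \ref{AssProfit}); the rest is the algebraic identification of the finite sum with a terminating $F_{2,1}$ together with the quadratic relation characterising $\gamma_1$.
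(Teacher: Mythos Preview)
Your proposal is correct and follows essentially the same route as the paper's proof: both plug the GBM data into (\ref{integraleq}), test the linear ansatz $b_n(z)=(1/C_n)^n z$, invoke the identity $\tfrac{\sigma^2}{2}\gamma_1\theta=r$, and conclude uniqueness of $C_n>0$ via Descartes' rule (you through Lemma~\ref{Dini}, the paper directly). The only cosmetic difference is that the paper first changes variable $t=y/b_n(z)$ and recognises the resulting integral as a hypergeometric function, whereas you expand $\pi_c$ by the binomial theorem, integrate termwise, and identify the finite sum with the terminating series for $F_{2,1}$; these are two sides of the same computation.
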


\begin{pf}
From (\ref{integraleq}), one has
\begin{eqnarray*}
\label{CESGBM1} \frac{\sigma^2}{2}x^{-\gamma_1} & = & \int
_x^{\infty} \biggl[\int_0^z
\bigl(y^{{1}/{n}} + b_n^{{1}/{n}}(z)\bigr)^{n-1}y^{\theta- 1}\,dy
\biggr]b_n^{{1}/{n}-1}(z) z^{-\theta
-\gamma_1 - 1}\,dz
\\
& = &\int_x^{\infty} \biggl[\int
_0^{z/b_n(z)} \bigl( 1 + t^{{1}/{n}}
\bigr)^{n-1}t^{\theta- 1}\,dt \biggr] b_n^{\theta}(z)
z^{-\theta-\gamma_1 - 1}\,dz
\\
& = & \int_x^{\infty} \biggl[\int
_0^{g_n(z)} \bigl( 1 + t^{{1}/{n}}
\bigr)^{n-1}t^{\theta- 1}\,dt \biggr] g_n^{-\theta}(z)
z^{-\gamma_1 - 1}\,dz,
\end{eqnarray*}
where we have performed the change of variable $t:= y/b_n(z)$ and we
have defined $g_n(z):=z/b_n(z)$ and $\theta:=\gamma_1 + 2\delta$.
But
\[
\int_0^{g_n(z)} \bigl( 1 + t^{{1}/{n}}
\bigr)^{n-1}t^{\theta- 1}\,dt = \frac
{1}{\theta}g_n^{\theta}(z)F_{2,1}
\bigl(-(n-1), n\theta; n\theta+ 1, -g_n^{{1}/{n}}(z)\bigr),
\]
where $F_{2,1}$ is the ordinary hypergeometric function (cf. \cite
{AbrSteg}, Chapter $15$) and, therefore,
%
\begin{equation}
\label{CESGBM2} \int_x^{\infty} F_{2,1}
\bigl(-(n-1), n\theta; n\theta+ 1, -g_n^{{1}/{n}}(z)\bigr)
\gamma_1 z^{-\gamma_1 - 1}\,dz = \frac{\gamma_1\theta
\sigma
^2}{2}x^{-\gamma_1}.
\end{equation}
Take now $g_n^{{1}/{n}}(z) = C_n$ to be constant and notice that
$\frac{\gamma_1\theta\sigma^2}{2} = r$ to obtain that (\ref{CESGBM2})
is satisfied for $C_n$ solving
%
\begin{equation}
\label{CESGBM3} F_{2,1}\bigl(-(n-1), n\theta; n\theta+ 1,
-C_n\bigr) = r.
\end{equation}
According to \cite{AbrSteg}, Chapter $15$, equation (15.4.1) at page
$561$, it is easy to verify that~(\ref{CESGBM3}) is equivalent to the
polynomial equation of order $n-1$
%
\begin{equation}
\label{CESGBM4} \sum_{k=1}^{n-1}
\frac{(1-n)_k (n\theta)_k}{(1 + n\theta)_k}(-1)^k \frac
{C_n^k}{k!} - (r-1) = 0,
\end{equation}
where $(\cdot)_k$ denotes the Pochhammer symbol. Notice that all the
coefficients of the polynomial in (\ref{CESGBM4}) are positive except
for that of order zero which is instead negative since $r>1$ (cf. Assumption~\ref{rbiggerthankappa}); then (\ref{CESGBM4}) admits a
unique positive solution by Descartes' rule of signs and (\ref
{boundaryGBMCESformula}) is finally obtained recalling that $g_n(z) = z/b_n(z)$.
\end{pf}

\subsubsection{Three-dimensional Bessel process}
\label{BesselCES}

As in Section~\ref{Bessel2}, let now $X^{x}$ be a three-dimensional
Bessel process.

%
\begin{proposition}
\label{boundaryBESSEL3DCES}
For any $0 \leq k \leq n-1$ and $n \geq2$, define the functions
$\alpha
_{k,n}\dvtx (0,\infty) \mapsto(0,\infty)$ by
%
\begin{equation}
\label{coefficientsBessel3D} \alpha_{k,n}(x):=\int_0^x
y^{1 + {k}/{n}}\sinh(\sqrt{2r}y)\,dy.
\end{equation}
Then, for any $x>0$ and $n \geq2$ the free-boundary $b_n(\cdot)$ is
given by
%
\begin{equation}
\label{boundaryBESSEL3DCESformula} b_n(x)= \biggl(\frac{1}{f_n(x)}
\biggr)^n,
\end{equation}
where $f_n(x)$ is the unique positive solution of the polynomial
equation of order $n-1$
%
\begin{equation}
\label{equationDiniBessel} \sum_{k=1}^{n-1} %
\pmatrix{ n-1 \vspace*{2pt}
\cr
k } %
\alpha_{k,n}(x)
f_n^k(x) = (r-1)\alpha_{0,n}(x),\qquad  x>0.
\end{equation}
Moreover, the mapping $x \mapsto b_n(x)$ is nondecreasing.
\end{proposition}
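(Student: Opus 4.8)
The plan is to specialize the integral equation (\ref{integraleq}) of Theorem \ref{integraleqthm} to the three-dimensional Bessel process and the CES marginal profit, reduce it by one differentiation to the polynomial identity (\ref{equationDiniBessel}), and then read off existence, uniqueness and regularity of $f_n$ from Lemma \ref{Dini} and Corollary \ref{Dinisecondpart}. First I would record $\pi_c(x,c)=(x^{1/n}+c^{1/n})^{n-1}c^{1/n-1}$ and use its zero-homogeneity in the pair $(y,b(z))$ to write $\pi_c(y,b(z))=(1+(y/b(z))^{1/n})^{n-1}=\sum_{k=0}^{n-1}\binom{n-1}{k}b(z)^{-k/n}y^{k/n}$ by the binomial theorem. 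Since for the three-dimensional Bessel process $\psi_r(y)\,m(dy)=2y\sinh(\sqrt{2r}y)\,dy$, the inner integral $\int_0^z\pi_c(y,b(z))\psi_r(y)\,m(dy)$ equals $2\sum_{k=0}^{n-1}\binom{n-1}{k}b(z)^{-k/n}\alpha_{k,n}(z)$ with $\alpha_{k,n}$ as in (\ref{coefficientsBessel3D}); moreover $s(dz)/\psi_r^2(z)=dz/\sinh^2(\sqrt{2r}z)$ and $\psi_r(x)=\sinh(\sqrt{2r}x)/x$. Writing $f_n(z):=b(z)^{-1/n}$, equation (\ref{integraleq}) becomes the scalar integral identity $\int_x^{\infty}\big(\sum_{k=0}^{n-1}\binom{n-1}{k}\alpha_{k,n}(z)f_n^k(z)\big)\,\sinh^{-2}(\sqrt{2r}z)\,dz=x/(2\sinh(\sqrt{2r}x))$.

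Next I would differentiate this identity in $x$. The left-hand derivative is $-\big(\sum_{k=0}^{n-1}\binom{n-1}{k}\alpha_{k,n}(x)f_n^k(x)\big)\sinh^{-2}(\sqrt{2r}x)$, while an integration by parts gives $r\,\alpha_{0,n}(x)=\tfrac12\big(\sqrt{2r}\,x\cosh(\sqrt{2r}x)-\sinh(\sqrt{2r}x)\big)=\tfrac12 x^2\psi_r'(x)$, so the right-hand derivative equals $-\,r\,\alpha_{0,n}(x)\sinh^{-2}(\sqrt{2r}x)$. Equating the two and separating the $k=0$ term $\binom{n-1}{0}\alpha_{0,n}(x)$ from the sum on the left produces exactly (\ref{equationDiniBessel}). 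The converse is clean: if $f_n$ solves (\ref{equationDiniBessel}), then the bracketed sum in the integral identity collapses to $r\,\alpha_{0,n}(z)$, and since $z\mapsto -z/(2\sinh(\sqrt{2r}z))$ is an antiderivative of $z\mapsto r\,\alpha_{0,n}(z)\sinh^{-2}(\sqrt{2r}z)$ which vanishes at $+\infty$, the integral identity is recovered with no extra constant; hence (\ref{integraleq}) and (\ref{equationDiniBessel}) are equivalent for positive solutions, and in particular all the integrals above converge.

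Finally I would apply the abstract results: $\alpha_{1,n},\dots,\alpha_{n-1,n}$ of (\ref{coefficientsBessel3D}) and $h:=(r-1)\alpha_{0,n}$ are positive and continuously differentiable on $(0,\infty)$ — positivity because $\sinh(\sqrt{2r}\cdot)>0$ there, and $r>1$ since $\kappa=1$ for the CES profit (cf.\ Assumption \ref{rbiggerthankappa}) — so Lemma \ref{Dini} yields a unique positive solution $f_n(x)$ of (\ref{equationDiniBessel}) for every $x>0$, and Corollary \ref{Dinisecondpart} gives $f_n\in C^1(0,\infty)$. Setting $b_n(x):=f_n(x)^{-n}$ then gives (\ref{boundaryBESSEL3DCESformula}), and $b_n$ is the free-boundary: it solves (\ref{integraleq}), and by Theorem \ref{integraleqthm} the free-boundary is the unique positive nondecreasing solution of (\ref{integraleq}), so it suffices to note that $f_n$ is nonincreasing. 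This can be inherited from Corollary \ref{bincreasingprop} via the uniqueness just obtained, or checked directly: implicitly differentiating (\ref{equationDiniBessel}) and using $\alpha_{k,n}'(x)=x^{k/n}\alpha_{0,n}'(x)$ with $\alpha_{0,n}'(x)=x\sinh(\sqrt{2r}x)>0$ shows that $\operatorname{sign}f_n'(x)=\operatorname{sign}\big((r-1)-\sum_{k=1}^{n-1}\binom{n-1}{k}(x^{1/n}f_n(x))^k\big)$, which is nonpositive because dividing (\ref{equationDiniBessel}) by $\alpha_{0,n}(x)$ and using the elementary bound $\alpha_{k,n}(x)\le x^{k/n}\alpha_{0,n}(x)$ gives $\sum_{k=1}^{n-1}\binom{n-1}{k}(x^{1/n}f_n(x))^k\ge r-1$. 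I expect the only genuinely delicate step to be the differentiation together with the integration-by-parts identity $r\,\alpha_{0,n}(x)=\tfrac12 x^2\psi_r'(x)$ that makes the $k=0$ term merge with the right-hand side into $(r-1)\alpha_{0,n}(x)$; the binomial bookkeeping, the boundary check at $+\infty$, and the monotonicity are routine or inherited from the general theory.
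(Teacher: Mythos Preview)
Your proposal is correct and follows essentially the same route as the paper: specialize (\ref{integraleq}) to the three-dimensional Bessel data, expand the CES marginal profit by the binomial theorem to bring in the coefficients $\alpha_{k,n}$, and reduce the integral identity to the pointwise polynomial equation (\ref{equationDiniBessel}) via the antiderivative relation $r\,\alpha_{0,n}(x)=\tfrac12 x^2\psi_r'(x)$; existence and uniqueness of the positive root then come from Lemma \ref{Dini} and Corollary \ref{Dinisecondpart}.

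The only genuine difference is in the monotonicity step. The paper argues indirectly, first treating $f_2$ separately and then, for general $n$, bounding the derivative identity through the auxiliary quantity $f_2(x)=(r-1)\alpha_{0,n}(x)/\alpha_{1,2}(x)$ together with the relations $\alpha_{1,2}(x)\le\sqrt{x}\,\alpha_{0,n}(x)$ and $\sqrt{x}\,\alpha_{0,n}'(x)=\alpha_{1,2}'(x)$. Your direct computation---using $\alpha_{k,n}'(x)=x^{k/n}\alpha_{0,n}'(x)$ to obtain $\operatorname{sign}f_n'(x)=\operatorname{sign}\big((r-1)-\sum_{k=1}^{n-1}\binom{n-1}{k}(x^{1/n}f_n(x))^k\big)$ and then closing with $\alpha_{k,n}(x)\le x^{k/n}\alpha_{0,n}(x)$ applied to (\ref{equationDiniBessel})---is shorter and more transparent, and it handles all $n\ge 2$ uniformly without the detour through $n=2$. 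Your alternative of simply inheriting monotonicity from Corollary \ref{bincreasingprop} (the free-boundary is nondecreasing, it solves (\ref{integraleq}) and hence (\ref{equationDiniBessel}), and the latter has a unique positive solution) is also a legitimate shortcut that the paper does not take.
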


\begin{pf}
The integral equation (\ref{integraleq}) takes the form
%
\begin{eqnarray}
\label{BesselCESintegraleq1}&& \frac{1}{2}\frac{x}{\sinh(\sqrt{2r}x)}\nonumber \\
&&\qquad =  \int
_x^{\infty} \biggl[\int_0^z
\biggl(1 + \biggl(\frac
{y}{b_n(z)} \biggr)^{{1}/{n}} \biggr)^{n-1}y
\sinh(\sqrt {2r}y)\,dy \biggr]\frac{dz}{\sinh^2(\sqrt{2r}z)}
\\
\nonumber &&\qquad =  \int_x^{\infty}\sum
_{k=0}^{n-1} %
\pmatrix{ n-1 \vspace*{2pt}
\cr
k } %
\biggl[\int_0^z \biggl(
\frac{y}{b_n(z)} \biggr)^{{k}/{n}}y \sinh (\sqrt {2r}y)\,dy \biggr]
\frac{dz}{\sinh^2(\sqrt{2r}z)}
\\
&&\qquad =  \int_x^{\infty}\sum
_{k=0}^{n-1} %
\pmatrix{ n-1 \vspace*{2pt}
\cr
k } %
\alpha_{k,n}(z) f_n^{k}(z)
\frac{dz}{\sinh^2(\sqrt{2r}z)},\nonumber
\end{eqnarray}
where we have used the binomial expansion and the definitions (\ref
{coefficientsBessel3D}) and (\ref{boundaryBESSEL3DCESformula}).
Since
$\frac{1}{2}\frac{x}{\sinh(\sqrt{2r}x)} = r \int_x^{\infty} \frac
{\alpha
_{0,n}(z) \,dz}{\sinh^2(\sqrt{2r}z)}$, one easily has from (\ref
{BesselCESintegraleq1})
\[
\label{BesselCESintegraleq2} \int_x^{\infty}\sum
_{k=1}^{n-1} %
\pmatrix{ n-1 \vspace*{2pt}
\cr
k } %
\alpha_{k,n}(z) f_n^{k}(z)
\frac{dz}{\sinh^2(\sqrt{2r}z)} = (r-1)\int_x^{\infty}
\frac{\alpha_{0,n}(z) \,dz}{\sinh^2(\sqrt{2r}z)},
\]
for any $x>0$, and thus by differentiating
%
\begin{equation}
\label{polynomialBESSEL} \sum_{k=1}^{n-1} %
\pmatrix{ n-1 \vspace*{2pt}
\cr
k } %
\alpha_{k,n}(x)
f_n^{k}(x) = (r-1)\alpha_{0,n}(x),
\end{equation}
for a.e. $x>0$.
It now remains to show that (\ref{polynomialBESSEL}) actually admits at
most one positive solution and that (\ref{polynomialBESSEL}) holds for
every $x>0$. Existence of a unique positive solution is guaranteed by
Lemma~\ref{Dini} with $h(x):= (r-1)\alpha_{0,n}(x)$, which is positive
due to Assumption~\ref{rbiggerthankappa}. Moreover, Lemma~\ref{Dini}
also ensures that $f_n(\cdot)$ is continuously differentiable on
$(0,\infty)$ and, therefore, (\ref{polynomialBESSEL}) actually holds
for every $x>0$.

As for the nondecreasing property of $x \mapsto b_n(x)$, $n \geq2$,
because of (\ref{boundaryBESSEL3DCESformula}) it suffices to prove that
$x \mapsto f_n(x)$, $n \geq2$, is nonincreasing. First of all, it is
not hard to see that $x \mapsto f_2(x)$ is nonincreasing by direct
calculations. To prove that any $f_n$, $n > 2$, is nonincreasing as
well, we can proceed as follows.
Thanks to Lemma~\ref{Dini} we can differentiate (\ref
{polynomialBESSEL}) to obtain
\begin{eqnarray*}
&&\frac{f'_n(x)}{f_n(x)}\sum_{k=1}^{n-1}
\pmatrix{ n-1 \vspace*{2pt}
\cr
k } %
k
\alpha_{k,n}(x) f_n^k(x) \\
&&\qquad= (r-1)
\alpha_{0,n}^{\prime}(x)
 - \sum_{k=1}^{n-1} %
\pmatrix{ n-1 \vspace*{2pt}
\cr
k } %
\alpha_{k,n}^{\prime}(x)
f_n^k(x),
\end{eqnarray*}
from which
%
\begin{equation}
\label{diseguaglianzaBessel1} \quad\frac{f'_n(x)}{f_n(x)} \sum_{k=1}^{n-1}
\pmatrix{ n-1 \vspace*{2pt}
\cr
k } %
k
\alpha_{k,n}(x) f_n^k(x) \leq(r-1)
\alpha_{0,n}^{\prime}(x) - \alpha _{1,2}^{\prime}(x)f_2(x),
\end{equation}
because the coefficients $\alpha_{k,n}$ are nondecreasing and $f_n$ is
positive.
Noticing that
$f_2(x) = (r-1)\alpha_{0,n}(x)/ \alpha_{1,2}(x)$ and plugging it into
(\ref{diseguaglianzaBessel1}) we find
\[
\frac{f'_n(x)}{f_n(x)} \sum_{k=1}^{n-1}
\pmatrix{ n-1 \vspace*{2pt}
\cr
k } %
k
\alpha_{k,n}(x) f_n^k(x) \leq(r-1) \biggl[
\alpha_{0,n}^{\prime}(x) - \frac
{\alpha_{0,n}(x) \alpha_{1,2}^{\prime}(x)}{ \alpha_{1,2}(x)} \biggr].
\]
Since now $\alpha_{1,2}(x) \leq\sqrt{x}\alpha_{0,n}(x)$ and $\sqrt {x}\alpha_{0,n}^{\prime}(x) - \alpha_{1,2}^{\prime}(x) = 0$, by definition, then
\[
\frac{f'_n(x)}{f_n(x)} \sum_{k=1}^{n-1}
\pmatrix{ n-1 \vspace*{2pt}
\cr
k } %
k
\alpha_{k,n}(x) f_n^k(x) \leq\frac{(r-1)}{\sqrt{x}}
\bigl[\sqrt {x}\alpha _{0,n}^{\prime}(x) - \alpha_{1,2}^{\prime}(x)
\bigr] = 0,
\]
and the claimed nonincreasing property of $f_n(\cdot)$, $n \geq2$, follows.
\end{pf}

Notice that finding the free-boundary of a quite intricate nonseparable
singular control problem has been reduced to determine the positive
root of a polynomial equation. Clearly, that can be done analytically
up to the second order (i.e., $n=3$). Then, for higher orders,
mathematical softwares can help in solving such a simple computational problem.

\subsubsection{CEV process}
\label{CEVCES}

As in Section~\ref{CEVD}, let $X^{x}$ be a CEV (Constant Elasticity of
Variance) process of parameter $\gamma\in(0,\frac{1}{2}]$ [see (\ref{CEV})].
Exploiting arguments completely similar to those used in the proof of
Proposition~\ref{boundaryBESSEL3DCES} we can show the following.

%
\begin{proposition}
\label{boundaryCEVCES}
For any $0 \leq k \leq n-1$ and $n \geq2$, define the functions
$\alpha
_{k,n}\dvtx (0,\infty) \mapsto(0,\infty)$ by
%
\begin{equation}
\label{coefficientsCEV} \alpha_{k,n}(x):=\int_0^x
y^{2\gamma+ {k}/{n} - 1}e^{({r}/{(\gamma\sigma^2)})y^{2\gamma}}\,dy.
\end{equation}
Then, for any $x>0$ and $n \geq2$ the free-boundary $b_n(\cdot)$ is
given by
%
\begin{equation}
\label{boundaryCEVCESformula} b_n(x)= \biggl(\frac{1}{f_n(x)}
\biggr)^n,
\end{equation}
where $f_n(x)$ is the unique positive solution of the polynomial
equation of order $n-1$
%
\begin{equation}
\label{equationDiniCEV} \sum_{k=1}^{n-1} %
\pmatrix{ n-1 \vspace*{2pt}
\cr
k } %
\alpha_{k,n}(x)
f_n^k(x) = \frac{\sigma^2}{2} + (r-1)
\alpha_{0,n}(x),\qquad x>0.
\end{equation}
Moreover, the mapping $x \mapsto b_n(x)$ is nondecreasing.
\end{proposition}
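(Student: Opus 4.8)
The plan is to reproduce, essentially verbatim, the scheme of the proof of Proposition~\ref{boundaryBESSEL3DCES}; the only new inputs are the CEV data ($\psi_r$, $m$, $s$) and one elementary quadrature. First, specialise the integral equation (\ref{integraleq}) to $X^{x}$ of CEV type: recall from Section~\ref{CEVD} that $\psi_r(x)=x$, $m(dy)=\frac{2}{\sigma^2 y^{2(1-\gamma)}}e^{\frac{r}{\gamma\sigma^2}y^{2\gamma}}dy$, $s(dz)=e^{-\frac{r}{\gamma\sigma^2}z^{2\gamma}}dz$, $\underline{x}=0$, $\overline{x}=+\infty$, and that the CES profit has $\pi_c(y,c)=(1+(y/c)^{1/n})^{n-1}$. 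Expanding $(1+(y/b_n(z))^{1/n})^{n-1}=\sum_{k=0}^{n-1}\binom{n-1}{k}b_n(z)^{-k/n}y^{k/n}$ by the binomial theorem and setting $f_n(z):=b_n(z)^{-1/n}$ (so that $b_n=f_n^{-n}$, which is (\ref{boundaryCEVCESformula})), the inner integral in (\ref{integraleq}) becomes $\frac{2}{\sigma^2}\sum_{k=0}^{n-1}\binom{n-1}{k}\alpha_{k,n}(z)f_n^k(z)$ with $\alpha_{k,n}$ as in (\ref{coefficientsCEV}), so that (\ref{integraleq}) reads
\[ \int_x^{\infty}\Big(\sum_{k=0}^{n-1}\binom{n-1}{k}\alpha_{k,n}(z)f_n^k(z)\Big)\frac{e^{-\frac{r}{\gamma\sigma^2}z^{2\gamma}}}{z^2}\,dz=\frac{\sigma^2}{2x}. \]

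Second, reduce this to the pointwise equation (\ref{equationDiniCEV}). Splitting off the $k=0$ summand, the substitution $u=y^{2\gamma}$ gives $\alpha_{0,n}(z)=\frac{\sigma^2}{2r}\big(e^{\frac{r}{\gamma\sigma^2}z^{2\gamma}}-1\big)$, whence
\[ \int_x^{\infty}\alpha_{0,n}(z)\,\frac{e^{-\frac{r}{\gamma\sigma^2}z^{2\gamma}}}{z^2}\,dz=\frac{\sigma^2}{2r}\Big(\frac{1}{x}-\int_x^{\infty}\frac{e^{-\frac{r}{\gamma\sigma^2}z^{2\gamma}}}{z^2}\,dz\Big); \]
the ``$-1$'' in this closed form (the boundary term at the left endpoint $0$) is what produces the extra constant $\sigma^2/2$ in (\ref{equationDiniCEV}) that is absent in the Bessel equation (\ref{equationDiniBessel}). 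Moving the $k=0$ term to the right-hand side and writing $\frac{1}{x}=\int_x^{\infty}e^{\frac{r}{\gamma\sigma^2}z^{2\gamma}}\cdot e^{-\frac{r}{\gamma\sigma^2}z^{2\gamma}}z^{-2}\,dz$, both sides become of the form $\int_x^{\infty}(\,\cdot\,)\,e^{-\frac{r}{\gamma\sigma^2}z^{2\gamma}}z^{-2}\,dz$; conversely, if $f_n$ satisfies the identity obtained by equating the two integrands, namely
\[ \sum_{k=1}^{n-1}\binom{n-1}{k}\alpha_{k,n}(x)f_n^k(x)=\frac{\sigma^2}{2r}+\frac{(r-1)\sigma^2}{2r}e^{\frac{r}{\gamma\sigma^2}x^{2\gamma}}=\frac{\sigma^2}{2}+(r-1)\alpha_{0,n}(x), \]
which is precisely (\ref{equationDiniCEV}) (using the closed form of $\alpha_{0,n}$ once more), then reversing the quadratures shows that $b_n=f_n^{-n}$ solves (\ref{integraleq}). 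Convergence of these integrals when the candidate $f_n$ is inserted is no obstruction once its monotonicity (below) is known, since then $f_n$ is bounded on each $[x,\infty)$ while $\alpha_{k,n}(z)e^{-\frac{r}{\gamma\sigma^2}z^{2\gamma}}z^{-2}=O(z^{k/n-2})$.

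Third, analyse (\ref{equationDiniCEV}). For each fixed $x>0$ it has a unique positive root $f_n(x)$ by Lemma~\ref{Dini} applied with $h(x):=\frac{\sigma^2}{2}+(r-1)\alpha_{0,n}(x)$, which is positive thanks to Assumption~\ref{rbiggerthankappa} ($r>1$); $f_n$ is $C^1$ on $(0,\infty)$ by Corollary~\ref{Dinisecondpart}. The one genuinely delicate point --- and, I expect, the main obstacle --- is to show that $x\mapsto f_n(x)$ is non-increasing, which is exactly what is needed for $b_n=f_n^{-n}$ to be nondecreasing and thus identifiable with the free boundary via Theorem~\ref{integraleqthm}. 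I would do this exactly as in the proof of Proposition~\ref{boundaryBESSEL3DCES}: the case $n=2$ directly from $f_2(x)=(\frac{\sigma^2}{2}+(r-1)\alpha_{0,n}(x))/\alpha_{1,2}(x)$, using the elementary relations $\alpha_{1,2}(x)\le x^{1/2}\alpha_{0,n}(x)$ and $\alpha_{1,2}'(x)=x^{1/2}\alpha_{0,n}'(x)$ read off from (\ref{coefficientsCEV}); and the case $n>2$ by differentiating (\ref{equationDiniCEV}) (licit by Corollary~\ref{Dinisecondpart}) to obtain
\[ \frac{f_n'(x)}{f_n(x)}\sum_{k=1}^{n-1}k\binom{n-1}{k}\alpha_{k,n}(x)f_n^k(x)=(r-1)\alpha_{0,n}'(x)-\sum_{k=1}^{n-1}\binom{n-1}{k}\alpha_{k,n}'(x)f_n^k(x), \]
and then bounding the right-hand side from above --- using that the $\alpha_{k,n}$ are nondecreasing together with the same comparisons among $\alpha_{0,n}$, $\alpha_{1,2}$ and their derivatives --- to conclude that $f_n'\le 0$. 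Once the monotonicity of $f_n$ is in place, $b_n=f_n^{-n}$ is positive, nondecreasing and solves (\ref{integraleq}) by construction, so the uniqueness part of Theorem~\ref{integraleqthm} forces $b_n\equiv b$, which is (\ref{boundaryCEVCESformula}). The only part requiring genuine work is the monotonicity of $f_n$; the reduction to (\ref{equationDiniCEV}) is routine computation and its solvability and regularity are Lemma~\ref{Dini} and Corollary~\ref{Dinisecondpart}.
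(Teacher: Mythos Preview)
Your proposal is correct and follows precisely the route the paper indicates: the paper's own proof of Proposition~\ref{boundaryCEVCES} consists entirely of the sentence ``Exploiting arguments completely similar to those used in the proof of Proposition~\ref{boundaryBESSEL3DCES} we can show that\ldots'', and your write-up carries out exactly those arguments --- specialising (\ref{integraleq}) to the CEV data, expanding $\pi_c$ binomially, isolating the $k=0$ term via the closed form $\alpha_{0,n}(z)=\frac{\sigma^2}{2r}(e^{\frac{r}{\gamma\sigma^2}z^{2\gamma}}-1)$ (which, as you correctly note, is the source of the extra $\sigma^2/2$ absent in (\ref{equationDiniBessel})), and then invoking Lemma~\ref{Dini}, Corollary~\ref{Dinisecondpart} and the differentiation-and-comparison scheme of the Bessel proof for monotonicity. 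Your treatment is in fact more detailed than what the paper provides.
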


\section*{Acknowledgments} I thankfully acknowledge two anonymous
referees for their pertinent and useful comments. I also thank Peter
Bank, Maria B. Chiarolla, Tiziano De Angelis, Salvatore Federico,
Goran Peskir, Frank Riedel and Jan-Henrik Steg for the constructive
discussions.
This paper was completed when I was visiting the Hausdorff Research
Institute for Mathematics (HIM) at the University of Bonn in the
framework of the Trimester Program ``Stochastic Dynamics in Economics
and Finance.'' I thank HIM for the hospitality.

%



\printaddresses

\end{document}